\newcommand{\intcc}[1]{\ensuremath{{\left[#1\right]}}}
\newtheorem{theorem}{Theorem}[section]
\newtheorem{lemma}[theorem]{Lemma}
\newtheorem{definition}[theorem]{Definition}
\newtheorem{remark}[theorem]{Remark}
\numberwithin{equation}{section}
\newcommand{\R}{{\mathbb{R}}}
\newcommand{\N}{{\mathbb{N}}}
\newcommand{\Let}{:=}
\newcommand{\EE}{\mathds{E}}
\newcommand{\PP}{\mathds{P}}
\newenvironment{nouppercase}{%
	\renewcommand{\uppercasenonmath}[1]{}}{}
\begin{document}

\begin{abstract}
This work proposes a compositional data-driven technique for the construction of finite Markov decision processes (MDPs) for large-scale stochastic networks with \emph{unknown} mathematical models. Our proposed framework leverages dissipativity properties of subsystems and their finite MDPs using a notion of \emph{stochastic storage functions} (SStF). In our data-driven scheme, we first build an SStF between each unknown subsystem and its data-driven finite MDP with a certified probabilistic confidence. We then derive dissipativity-type compositional conditions to construct a \emph{stochastic bisimulation function} (SBF) between an interconnected network and its finite MDP using data-driven SStF of subsystems. Accordingly, we formally quantify the probabilistic distance between trajectories of an unknown large-scale stochastic network and those of its finite MDP with a guaranteed confidence. We illustrate the efficacy of our data-driven results over a room temperature network composing $100$ rooms with unknown models.
\end{abstract}

\title{{\LARGE MDP Abstractions from Data: Large-Scale Stochastic Networks}}

\author{{\bf {\large Abolfazl Lavaei,~\emph{{\small Senior Member,~IEEE}}}}\\{\normalfont School of Computing, Newcastle University, United Kingdom}}

\pagestyle{fancy}
\lhead{}
\rhead{}
  \fancyhead[OL]{Abolfazl Lavaei}

  \fancyhead[EL]{MDP Abstractions from Data: Large-Scale Stochastic Networks}
  \rhead{\thepage}
 \cfoot{}
 
\begin{nouppercase}
	\maketitle
\end{nouppercase}

\section{Introduction}

Providing a formal analysis framework for large-scale stochastic networks to fulfill complex logic properties is generally very challenging. This is particularly due to  (i) dealing with uncountable state/input sets with large dimensions, (ii) stochastic nature of dynamics, (iii) complex logic requirements, and (iv) lack of closed-form mathematical models in many real-world applications. To mitigate the aforesaid difficulties, one rewarding solution is to approximate the original (concrete) system by a finite MDP as a finite-state model. By establishing a similarity relation between each concrete system and its finite MDP using a notion of \emph{stochastic simulation functions}, the probabilistic mismatch between two systems can be quantified within a guaranteed error bound.

There have been numerous studies, conducted in the past two decades, on the abstraction-based analysis of stochastic systems. Existing results encompass construction of (in)finite abstractions for \emph{stochastic} dynamical systems with continuous state sets~\cite{APLS08,julius2009approximations,zamani2014symbolic,lahijanian2015formal}. However, the main bottleneck of those techniques is \emph{curse of dimensionality} problem due to discretizing state and input sets. \emph{Compositional techniques} for constructing finite abstractions have then been proposed to alleviate the underlying state-explosion problem: one can build a finite abstraction for a large-dimensional system using finite abstractions of smaller subsystems~\cite{hahn2013compositional,lavaei2022scalable,nejati2021compositional,lavaei2022dissipativity,nejati2020compositional,lavaei2019compositional,ref4}. 

Although the above-mentioned studies on constructing finite abstractions are
comprehensive, unfortunately, they require knowing the mathematical model of the system. Accordingly, one cannot leverage those techniques for many practical scenarios with unknown models. Although \emph{identification techniques} have been proposed to learn approximate models of unknown systems, obtaining a precise model is computationally very burdensome~\cite[and references herein]{Hou2013model}). In addition, even
if a model can be identified using system identification
techniques, the relation between the identified model and its finite abstraction should be still constructed. Consequently,
the computational complexity exists in two levels of identifying the model and establishing the similarity relation. In this work, we develop a \emph{direct} data-driven scheme, without performing any system identification, and construct finite
abstractions together with their associated similarity relations by directly gathering data from trajectories of unknown concrete systems.

The original contribution here is to propose a compositional data-driven technique for constructing finite MDPs for large-scale stochastic control networks with unknown mathematical models. We leverage dissipativity properties of subsystems and their finite MDPs using a notion of \emph{stochastic storage functions} (SStF). In our data-driven scheme, we recast conditions of SStF as a robust optimization program (ROP). By gathering samples from trajectories of each unknown subsystem, we then provide a scenario optimization program (SOP) for the original ROP. By quantifying the closeness between the optimal values of SOP and ROP, we build an SStF between each unknown subsystem and its data-driven finite MDP with a guaranteed probabilistic confidence. We then derive a dissipativity-type compositional condition to construct \emph{stochastic bisimulation functions} (SBF), between an interconnected network and its finite MDP, using data-driven SStF of subsystems. Eventually, we quantify the probabilistic closeness between trajectories of an unknown interconnected network and its finite MDP with a guaranteed confidence level. We demonstrate the efficacy of our proposed data-driven results over a room temperature network composing $100$ rooms with unknown models.

There has been a limited number of work on data-driven construction of symbolic models (in deterministic setting) and finite MDPs (in stochastic setting). Existing results include: data-driven abstraction of monotone systems with disturbances\cite{makdesi2021efficient}, data-driven construction of symbolic abstractions via a probably approximately correct (PAC) approach~\cite{devonport2021symbolic}; data-driven construction of finite abstractions for verification of unknown systems~\cite{coppola2022data}; data-driven construction of symbolic models for incrementally input-to-state stable systems~\cite{Lavaei_LCSS22_2}; and data-driven construction of finite MDPs for incrementally input-to-state stable systems~\cite{Lavaei_LCSS22_1}. In comparison, we propose here a compositional data-driven framework using dissipativity approach for constructing finite MDPs for large-scale interconnected networks, whereas the results
in~\cite{makdesi2021efficient,devonport2021symbolic,coppola2022data,Lavaei_LCSS22_2,Lavaei_LCSS22_1} are all tailored to monolithic systems. As a result, the approaches in~\cite{makdesi2021efficient,devonport2021symbolic,coppola2022data,Lavaei_LCSS22_2,Lavaei_LCSS22_1} suffer from the sample complexity problem and are not useful in practice when dealing with high-dimensional systems. In addition, the works~\cite{makdesi2021efficient,devonport2021symbolic,coppola2022data,Lavaei_LCSS22_2} construct \emph{symbolic} abstractions from unknown \emph{deterministic} systems, whereas we develop here a data-driven technique for building \emph{finite MDPs} for \emph{stochastic} systems which is more challenging given the stochastic nature of unknown dynamics.

\section{Discrete-Time Stochastic Control Systems}\label{Sec: dt-NDS}

\subsection{Notation and Preliminaries}

In this work, $\mathbb{R},\mathbb{R}^+$, and $\mathbb{R}^+_0$, represent sets of real, positive, and non-negative real numbers, respectively. Symbols $\mathbb{N} := \{0,1,2,...\}$ and $\mathbb{N}^+=\{1,2,...\}$ denote, respectively, sets of non-negative and positive integers. A column vector, given $N$ vectors $x_i \in \mathbb{R}^{n_i}$, is represented by $x=[x_1;\dots;x_N]$. Given a set $X$, its power set is denoted by $2^X$. We denote the minimum and maximum eigenvalues of a symmetric matrix $P$, respectively, by $\lambda_{\min}(P)$ and $\lambda_{\max}(P)$. Given any scalar $a\in\mathbb R$ and vector $x\in\mathbb{R}^{n}$, $\vert a\vert$ and $\Vert x\Vert$ represent, respectively, the absolute value and the Euclidean norm. For a matrix $P\in\mathbb R^{m\times n}$, $\|P\| := \sqrt{\lambda_{\max}(P^\top P)}$. We denote the supremum of a function $f:\mathbb N\rightarrow\mathbb{R}^n$ by $\Vert f\Vert_{\infty} \Let \text{(ess)sup}\{\Vert f(k)\Vert,k\geq 0\}$. Given a system $\Lambda$ and a property $\varphi$, $\Lambda \vDash \varphi$ denotes that $\Lambda$ fulfills $\varphi$.

Given a probability space $(\Omega, \mathcal{F}_\Omega, \PP_\Omega)$, with $\Omega$ being a sample space, $\mathcal{F}_\Omega$ a sigma-algebra on $\Omega$, and $\PP_\Omega$ a probability measure, $N$-Cartesian product set of $\Omega$ and its associated product measure are denoted, respectively, by $\Omega^N$ and $\PP^N$.
A set $X$ is Borel, denoted by $\mathcal B(X)$, if it is homeomorphic to a Borel subset of a Polish space, \textit{i.e.}, a separable and metrizable space.

\subsection{Discrete-Time Stochastic Control Systems}\label{Sec: dt-SCS}
Here, we first formally define discrete-time stochastic control systems as the following.
\begin{definition}
	A discrete-time stochastic control system (dt-SCS) is characterized by
	\begin{align}\label{EQ:13}
		\Lambda=(X,U,D,\varsigma,f),
	\end{align}
	where:
	\begin{itemize}
		\item $X\subseteq \mathbb R^n$ is a Borel state set;
		\item $U = \{\nu_1,\nu_2,\dots,\nu_m\}$, with $\nu_i \in \mathbb R^{\bar m}, i \in\{1,\dots,m\}$, is a discrete input set;
		\item $D\subseteq \mathbb R^p$ is a Borel disturbance set;
		\item $\varsigma$ is a sequence of independent-and-identically distributed
		(i.i.d.) random variables from the sample space $\Omega$ to a set
		$\mathcal{H}_{\varsigma}$, \emph{i.e.} $\varsigma:=\{\varsigma(k):\Omega\rightarrow \mathcal H_{\varsigma},\,\,k\in\N\}$;
		\item $f:X\times U\times D\times\mathcal{H}_{\varsigma}\rightarrow X$ is a transition map, which is assumed to be unknown.
	\end{itemize}
\end{definition}
The evolution of dt-SCS can be described by 
\begin{align}\label{EQ:14}
	\Lambda\!:x(k+1)=f(x(k),\nu(k),d(k),\varsigma(k)),\quad k\in\mathbb N, 
\end{align} 
for any $x\in X$, $\nu(\cdot):\Omega\rightarrow U$, and $d(\cdot):\Omega\rightarrow D$. The \emph{state trajectory} of $\Lambda$ under $\nu(\cdot), d(\cdot)$ starting from $x(0)= x_0$ is denoted by $x_{x_0\nu d}\!\!: \Omega \times \mathbb N \rightarrow X$. 

Since the ultimate objective is to construct a finite MDP for an \emph{interconnected} dt-SCS, we consider dt-SCS in~\eqref{EQ:14} as a \emph{subsystem} and present another definition for \emph{interconnected} dt-SCS without disturbances $d$ as a composition of individual dt-SCS with disturbances $d$. 

\begin{definition}\label{Def:2}
	Consider $M\in\mathbb N^+$ dt-SCS $\Lambda_i=(X_i,U_i,D_i,f_i,\varsigma_i)$, $i\in \{1,\dots, M\}$, with a matrix $\mathcal M$ as a coupling among them. An interconnection of $\Lambda_i$ is characterized as
	$\Lambda=(X,U,f, \varsigma)$, represented by
	$\mathcal{I}(\Lambda_1,\ldots,\Lambda_M)$, where $X:=\prod_{i=1}^{M}X_i$, $U:=\prod_{i=1}^{M}U_i$, $f:=[f_1;\dots;f_{M}]$, and $\varsigma:=[\varsigma_1;\dots;\varsigma_M]$, such that:
	\begin{align}\label{Eq:41}
		\Big[d_{1};\cdots;d_{M}\Big]=\mathcal M~\!\Big[x_1;\cdots;x_M\Big]\!.
	\end{align}
	Such an interconnected dt-SCS is described by
	\begin{equation}\label{Eq:5}
		\Lambda\!:x(k+1)\!=\!f(x(k),\nu(k),\varsigma(k)),  \text{with}~ f: X \times U \times \mathcal{H}_{\varsigma}\rightarrow X.
	\end{equation}
\end{definition}

An interconnected dt-SCS $\Lambda$ is schematically depicted in Fig.~\ref{Fig1}.

\begin{figure} 
	\begin{center}
		\includegraphics[width=0.42\linewidth]{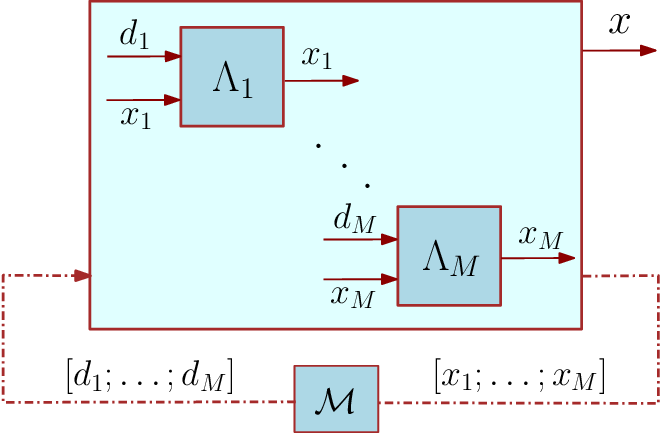} 
		\caption{Interconnected dt-SCS $\mathcal{I}(\Lambda_1,\ldots,\Lambda_M)$.} 
		\label{Fig1}
	\end{center}
\end{figure}

\subsection{Finite Markov Decision Processes}\label{MDPs}

Here, we construct finite MDPs as finite-state approximations of dt-SCS. To this end, we first partition state and disturbance sets as $X = \cup_i \mathsf X_i$ and $D = \cup_i \mathsf D_i$,
and then pick representative points $\hat x_i\in \mathsf X_i$ and $\hat d_i\in \mathsf D_i$ within those partitions sets as finite states and disturbances. 

The dt-SCS in \eqref{EQ:13} can be \emph{equivalently} considered as a continuous-space MDP $\Lambda=(X,U,D,\mathsf{T}_{\mathsf x})$~\cite{kallenberg1997foundations}, 
with $\mathsf{T}_{\mathsf x}\!:\mathcal B(X)\times X\times U\times D\rightarrow[0,1]$ being a conditional stochastic kernel that assigns to any $x \in X, \nu\in U, d\in D$, a probability measure $\mathsf{T}_{\mathsf x}(\cdot \,\big |\, x,\nu,d)$
such that for any set $\mathcal X \in \mathcal B(X)$:\vspace{-0.2cm} 
\begin{align*}
	\PP \big\{x(k+1)\in \mathcal X&\,\big |\, x(k),\nu(k),d(k)\big\}= \int_{\mathcal X} \mathsf{T}_{\mathsf x} (\mathsf{d}x(k+1)\,\big |\,x(k),\nu(k),d(k)).
\end{align*}
One can \emph{uniquely} determine the conditional stochastic kernel $\mathsf{T}_{\mathsf x}$ using $(\varsigma,f)$~\cite{kallenberg1997foundations}. In the next definition, we formalize the construction of finite MDPs.

\begin{definition}
	Consider a continuous-space MDP $\Lambda=(X,U,D,\mathsf{T}_{\mathsf x})$. The \emph{finite MDP}, constructed from $\Lambda$, is characterized by  $\hat \Lambda\!=\!(\hat X, U,\hat D,\mathsf{\hat T}_{\mathsf x})$, with $\hat X$ and $\hat D$ being discrete state and disturbance sets of~$\hat\Lambda$ and
	\begin{align*}
		\mathsf{\hat T}_{\mathsf x} (x'\big|x,\nu,d) 
		\!=\! \mathsf{T}_{\mathsf x} (\Xi(x')\big|x,\nu,d), \forall x,x' \!\in\! \hat X, \forall \nu\!\in\! U, \forall d\!\in\!D,
	\end{align*}
	where $\Xi:X\rightarrow 2^X$. Equivalently, given a dt-SCS $\Lambda=(X,U,D,\varsigma,f)$, its constructed \emph{finite MDP} can be characterized as~\cite{kallenberg1997foundations}
	\begin{equation*}
		\hat\Lambda =(\hat X, U,\hat D,\varsigma,\hat f),
	\end{equation*}
	where $\hat f\!:\hat X\times U\times \hat D \times\mathcal H_{\varsigma}\rightarrow\hat X$ is a transition function defined as
	\begin{equation}\label{EQ:3}
		\hat f(\hat{x},\nu,\hat d,\varsigma) = \mathcal P_{x}(f(\hat{x},\nu,\hat d,\varsigma)),
	\end{equation}
	and $\mathcal P_x:X\rightarrow \hat X$ is a quantization map with a \emph{state discretization parameter} $\rho$ fulfilling the following inequality: 
	\begin{equation}\label{EQ:4}
		\Vert \mathcal P_x(x)-x\Vert \leq \rho,\quad \forall x\in X. 
	\end{equation}
\end{definition}

\section{Stochastic Storage and Bisimulation Functions}
In this section, we aim at quantifying the probabilistic mismatch between trajectories of an interconnected dt-SCS and its finite MDP using a notion of \emph{stochastic bisimulation functions}, as defined next.

\begin{definition}\label{Def:3}
	Given an interconnected dt-SCS $\Lambda =(X,U,\varsigma,f)$ and its 
	finite MDP $\hat\Lambda =(\hat X,U,\varsigma,\hat f)$, a function $\mathcal V\!:X\times\hat X\to\R_0^+$ is a stochastic bisimulation function (SBF) between $\hat\Lambda$ and $\Lambda$, represented by $\hat\Lambda\cong_{\mathcal{V}}\Lambda$, if
	\begin{subequations}
		\begin{align}\label{EQ:15}
			&\forall x\in X, \forall \hat x\in\hat X\!: \quad\quad\quad\quad\quad\quad\quad\quad\quad\quad\quad\quad\quad\gamma\Vert x - \hat x\Vert^2\leq \mathcal V(x,\hat x),\\\label{EQ:16}
			&\forall x\in X, \forall \hat x\in\hat X\!, \forall\nu\in U\!: \quad\quad\quad\quad\EE \Big [\mathcal V(f(x,\nu,\varsigma),\hat{f}(\hat x,\nu,\varsigma))\,\big |\, x, \hat x, \nu\Big ]\leq\alpha \mathcal V(x,\hat{x})+\varpi,
		\end{align}
	\end{subequations}
	for some $\gamma\in\R^+$, $0< \alpha <1$, and $\varpi \in\R_0^+$, where $\EE$ is the expected value associated to $\varsigma$.
\end{definition}

We now leverage SBF $\mathcal V$ and quantify the probabilistic mismatch between trajectories of an interconnected system and its finite MDP, as in the next theorem~\cite{ref4}. 

\begin{theorem}\label{Thm:4}
	Given an interconnected dt-SCS $\Lambda$ and its 
	finite MDP $\hat\Lambda$, let $\mathcal V$ be an SBF between $\hat\Lambda$ and $\Lambda$. Then the probabilistic closeness between state trajectories of dt-SCS (\emph{i.e.} $x_{x_0\nu}(\cdot)$) and its
	finite MDP (\emph{i.e.} $\hat x_{\hat x_0\nu}(\cdot)$) within a time horizon $\mathcal T\in \mathbb N$ can be quantified as
	\begin{align}\label{delta}
		&\PP\left\{\sup_{0\leq k\leq \mathcal T}\Vert x_{x_0\nu}(k)-\hat x_{\hat x_0\nu}(k)\Vert\geq\varepsilon\,\big |\,x_0,\hat x_0\right\} \le \delta,
	\end{align}
	where
	\begin{align*}
		&\delta:=
		\begin{cases}
			1\!-\!(1\!-\!\frac{\mathcal V(x_0,\hat x_0)}{\gamma\varepsilon^2})(1\!-\!\frac{\varpi}{\gamma\varepsilon^2})^{\mathcal T}, & ~~\text{if}~\gamma\varepsilon^2\!\geq\!\frac{\varpi}{1\!-\!\alpha},\\
			(\frac{\mathcal V(x_0,\hat x_0)}{\gamma\varepsilon^2})\alpha^{\mathcal T}\!+\!(\frac{\varpi}{(1\!-\!\alpha)\gamma\varepsilon^2})(1\!-\!\alpha^{\mathcal T}), & ~~\text{if}~\gamma\varepsilon^2\!<\!\frac{\varpi}{1-\alpha},
		\end{cases}
	\end{align*}
	with $\varepsilon\in \mathbb R^+$ being an arbitrary threshold. If $\varpi=0$ in \eqref{EQ:16}, the closeness guarantee in~\eqref{delta} can be generalized to infinite horizons as 
	\begin{align*}
		\PP\left\{\sup_{0\leq k<\infty}\Vert x_{x_0\nu}(k)\!-\!\hat x_{\hat x_0\nu}(k)\Vert\geq\varepsilon\,\big |\,x_0,\hat x_0\right\} \!\le\! \frac{\mathcal V(x_0,\hat x_0)}{\gamma\varepsilon^2}.
	\end{align*}
\end{theorem}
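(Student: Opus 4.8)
The plan is to treat $\mathcal V(x_{x_0\nu}(k),\hat x_{\hat x_0\nu}(k))$ as a nonnegative stochastic process and apply a supermartingale-type inequality. First I would observe that condition~\eqref{EQ:16}, applied along the paired trajectories of $\Lambda$ and $\hat\Lambda$ driven by the \emph{same} noise realization $\varsigma$ and the \emph{same} input $\nu$, gives the one-step estimate
\begin{align*}
\EE\big[\mathcal V(x_{x_0\nu}(k+1),\hat x_{\hat x_0\nu}(k+1))\,\big|\,x_{x_0\nu}(k),\hat x_{\hat x_0\nu}(k)\big]\le \alpha\,\mathcal V(x_{x_0\nu}(k),\hat x_{\hat x_0\nu}(k))+\varpi,
\end{align*}
so that the process is a nonnegative $c$-supermartingale with rate $\alpha$ and constant additive term $\varpi$. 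The event of interest, $\sup_{0\le k\le\mathcal T}\Vert x_{x_0\nu}(k)-\hat x_{\hat x_0\nu}(k)\Vert\ge\varepsilon$, is by~\eqref{EQ:15} contained in the event $\sup_{0\le k\le\mathcal T}\mathcal V(x_{x_0\nu}(k),\hat x_{\hat x_0\nu}(k))\ge\gamma\varepsilon^2$, because $\gamma\Vert x-\hat x\Vert^2\le\mathcal V(x,\hat x)$ forces $\Vert x-\hat x\Vert\ge\varepsilon$ to imply $\mathcal V\ge\gamma\varepsilon^2$. Hence it suffices to bound the probability that this nonnegative supermartingale crosses the level $\gamma\varepsilon^2$ within horizon $\mathcal T$.

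Next I would invoke the standard maximal inequality for nonnegative supermartingales with an affine drift (a Ville/Doob-type bound; see e.g.\ the references cited for Theorem~\ref{Thm:4}), which yields exactly the two regimes appearing in the definition of $\delta$. In the regime $\gamma\varepsilon^2\ge\varpi/(1-\alpha)$ one tracks the telescoped bound: iterating the one-step inequality and using the threshold-crossing argument produces the product form $1-(1-\tfrac{\mathcal V(x_0,\hat x_0)}{\gamma\varepsilon^2})(1-\tfrac{\varpi}{\gamma\varepsilon^2})^{\mathcal T}$. In the complementary regime $\gamma\varepsilon^2<\varpi/(1-\alpha)$, one instead uses that $\EE[\mathcal V(x_{x_0\nu}(k),\hat x_{\hat x_0\nu}(k))]\le\alpha^k\mathcal V(x_0,\hat x_0)+\varpi\sum_{j=0}^{k-1}\alpha^j=\alpha^k\mathcal V(x_0,\hat x_0)+\tfrac{\varpi}{1-\alpha}(1-\alpha^k)$ together with a Markov-inequality/optional-stopping argument to get $(\tfrac{\mathcal V(x_0,\hat x_0)}{\gamma\varepsilon^2})\alpha^{\mathcal T}+(\tfrac{\varpi}{(1-\alpha)\gamma\varepsilon^2})(1-\alpha^{\mathcal T})$. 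Finally, for the case $\varpi=0$ the drift is purely contractive, the process is a genuine nonnegative supermartingale, and Ville's inequality directly gives $\PP\{\sup_{k\ge0}\mathcal V\ge\gamma\varepsilon^2\}\le \mathcal V(x_0,\hat x_0)/(\gamma\varepsilon^2)$, which transfers to the infinite-horizon claim via the same inclusion of events.

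The main obstacle I anticipate is the choice and correct application of the maximal inequality in the affine-drift case: a plain nonnegative supermartingale bound is not available because of the additive $\varpi$, so one must either pass to the stopped process at the first crossing time $\tau$ of level $\gamma\varepsilon^2$ and carefully bound $\PP\{\tau\le\mathcal T\}$, or construct an auxiliary genuine supermartingale by subtracting the appropriate cumulative drift; getting the two-regime split and the exact constants in $\delta$ right is the delicate bookkeeping. Everything else — the measurability of the paired process, the use of the i.i.d.\ structure of $\varsigma$ so that $\EE[\,\cdot\,|\,x_{x_0\nu}(k),\hat x_{\hat x_0\nu}(k)]$ coincides with the one-step conditional expectation in~\eqref{EQ:16}, and the reduction via~\eqref{EQ:15} — is routine. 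I would therefore structure the proof as: (1) set up the paired process and reduce to a level-crossing statement via~\eqref{EQ:15}; (2) establish the affine one-step supermartingale estimate from~\eqref{EQ:16}; (3) apply the appropriate maximal inequality in each of the two regimes to obtain $\delta$; (4) specialize to $\varpi=0$ and $\mathcal T\to\infty$.
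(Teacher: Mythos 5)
Your plan is correct and coincides with the argument the paper relies on: the paper gives no proof of Theorem~\ref{Thm:4} and simply cites~\cite{ref4}, where the bound is obtained exactly as you describe --- reduce the event to a level crossing of $\mathcal V$ at $\gamma\varepsilon^2$ via~\eqref{EQ:15}, observe that~\eqref{EQ:16} makes $\mathcal V$ along the coupled trajectories a nonnegative supermartingale with affine drift, and invoke the classical two-regime maximal inequality (Kushner-type; Ville's inequality in the $\varpi=0$ case). Your step~(3) defers the delicate two-regime computation to that classical result rather than carrying it out, but this is precisely the level of detail at which the paper itself leaves the proof.
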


In general, constructing SBF for large-scale interconnected networks is very expensive (if it is not impossible). To tackle this computational difficulty, we present a notion of \emph{stochastic storage functions} for individual subsystems and propose, in Section~\ref{Sec:Com}, some compositional dissipativity conditions to construct an SBF for an interconnected network using SStF of subsystems.

\begin{definition}\label{Def:31}
	Given a dt-SCS $\Lambda =(X,U,D,\varsigma,f)$ and its 
	finite MDP $\hat\Lambda =(\hat X,U,\hat D,\varsigma,\hat f)$, a function $\mathcal S\!:X\times\hat X\to\R_0^+$ is a stochastic storage function (SStF) between $\hat\Lambda$ and $\Lambda$, represented by $\hat\Lambda\cong_{\mathcal{S}}\Lambda$, if
	\begin{subequations}
		\begin{align}\label{EQ:151}
			&\forall x\in X, \forall \hat x\in\hat X\!: \quad\quad\quad\quad\quad\quad\quad\quad\quad\quad\quad\quad\gamma\Vert x - \hat x\Vert^2\leq \mathcal S(x,\hat x),\\\notag
			&\forall x\in X, \forall \hat x\in\hat X\!, \forall\nu\in U\!, \forall d\in D, \forall\hat d\in\hat D\!: \\\label{EQ:161}
			&\EE \Big [\mathcal S(f(x,\nu,d,\varsigma),\hat{f}(\hat x,\nu,\hat d,\varsigma))\,\big |\, x, \hat x, \nu,d,\hat d\Big ]\leq\alpha\mathcal S(x,\hat{x})+\varpi +\begin{bmatrix}
				d-\hat d\\
				x-\hat x
			\end{bmatrix}^\top
			\underbrace{\begin{bmatrix}
					\mathcal Z^{11}&\mathcal Z^{12}\\
					\mathcal Z^{21}&\mathcal Z^{22}
			\end{bmatrix}}_{\mathcal Z}\!\begin{bmatrix}
				d-\hat d\\
				x-\hat x
			\end{bmatrix}\!\!,
		\end{align}
	\end{subequations}
	for some $\gamma\in\R^+$, $0< \alpha <1$, $\varpi \in\R_0^+$, and a symmetric matrix $\mathcal Z$ with partitions $\mathcal Z^{jj'},$ $j,j'\in\{1,2\}$.
\end{definition}

\section{Data-Driven Construction of SStF}
In our data-driven framework, we consider SStF in the form of $\mathcal{S}(\kappa,x,\hat x)=\sum_{j=1}^{z} {\kappa}_jh_j(x,\hat x)$ with basis functions $h_j(x,\hat x)$ and unknown variables $\kappa=[{\kappa}_{1};\ldots;\kappa_z] \in \mathbb{R}^z$. We now cast conditions~\eqref{EQ:151}-\eqref{EQ:161} of SStF as a robust optimization program (ROP):
\begin{align}\label{ROP}
	&\text{ROP}\!:\!\left\{
	\hspace{-1.5mm}\begin{array}{l}\min\limits_{[\mathcal Q;\psi]} \quad\!\!\!\psi,\\
		\, \text{s.t.} \quad  \,\max_j\Big\{\Upsilon_j(x,\hat x,\nu,d, \hat d, \mathcal Q)\Big\}\leq \psi,~  j\in\{1,2\}, \\ 
		\quad\quad\quad\!\forall x\in X, \forall \hat x\in \hat X, \forall \nu\in U, \forall d\in D, \forall \hat d\in \hat D,\\
		\quad\quad\quad\!\mathcal Q = [\gamma;\alpha;\varpi;\mathcal Z^{11};\mathcal Z^{12};\mathcal Z^{22};{\kappa}_{1};\dots;\kappa_z],\\
		\quad\quad\quad\!\gamma\!\in\!\R^+, \alpha \!\in\!(0,1), \varpi \!\in\!\R_0^+,\mathcal Z^{jj'}\!\!, \psi \!\in\! \mathbb R,\end{array}\right.
\end{align}
where:
\begin{align}\notag
	\Upsilon_1 &= \gamma\Vert x - \hat x\Vert^2 - \mathcal S(\kappa,x,\hat x),\\\label{EQ:11}
	\Upsilon_2 &= \EE \,\Big [\mathcal S(\kappa,f(x,\nu,d,\varsigma),\hat{f}(\hat x,\nu,\hat d,\varsigma)) \,\big |x, \hat x,\nu,d, \hat d\Big ]-\alpha\mathcal S(\kappa,x,\hat{x})-\varpi- \begin{bmatrix}
		d\!-\!\hat d\\
		x\!-\!\hat x
	\end{bmatrix}^\top\!\!
	\begin{bmatrix}
		\mathcal Z^{11}\!\!&\!\!\mathcal Z^{12}\\
		\mathcal Z^{21}\!\!&\!\!\mathcal Z^{22}
	\end{bmatrix}\!\begin{bmatrix}
		d\!-\!\hat d\\
		x\!-\!\hat x
	\end{bmatrix}\!\!.
\end{align}
When $\psi_R^*$, the optimal value of ROP, is less than or equal to zero, it is straightforward to confirm that conditions \eqref{EQ:151}-\eqref{EQ:161} are met.

The ROP in~\eqref{ROP} is not solvable due to unknown maps $f$, $\hat f$ appearing in $\Upsilon_2$. To resolve this difficulty, we collect $N$ i.i.d. data within $X\times D$, denoted by $(\bar x_i,\bar d_i)^{N}_{i=1}$. We now propose a scenario optimization program (SOP), with an optimal value $\psi_N^*$, associated to the original ROP:
\begin{subequations}
	\begin{align}\label{SOP}
		&\text{SOP}_N\!\!:\!\left\{
		\hspace{-1.5mm}\begin{array}{l}\min\limits_{[\mathcal Q;\psi]} \,\,\,\,\,\!\!\psi,\\
			\, \text{s.t.}  \quad \,\max_j\!\Big\{\Upsilon_j(\bar x_i, \hat x,\nu,\bar d_i,\hat d,\mathcal Q)\Big\}\leq \psi, ~\! j\!\in\!\{1,2\}, \\
			\quad \quad\quad \!\!\forall \bar x_i\in X,\forall \bar d_i\in D,\forall i\!\in\! \{1,\ldots, N\},\\
			\quad \quad\quad \!\!\forall \hat x\in \hat X,\forall \hat d\in \hat D,\forall \nu\in U,\\
			\quad\quad\quad\!\!\mathcal Q = [\gamma;\alpha;\varpi;\mathcal Z^{11};\mathcal Z^{12};\mathcal Z^{22};{\kappa}_{1};\dots;\kappa_z],\\
			\quad\quad\quad\!\gamma\!\in\!\R^+, \alpha \!\in\!(0,1), \varpi \!\in\!\R_0^+,\mathcal Z^{jj'}\!\!, \psi \!\in\! \mathbb R.\end{array}\right.
	\end{align}
	We can now replace the unknown function $f(\bar x_i, \nu, \bar d_i, \varsigma)$ in $\Upsilon_2$ by observing the one-step transition of dt-SCS starting from $\bar x_i$ under $\nu$ and $\bar d_i$. Regarding $\hat f(\hat x, \nu, \hat d, \varsigma)$ in $\Upsilon_2$, we begin by initializing the unknown model at $\hat x$ under $\nu$ and $\hat d$ to compute $f(\hat x, \nu, \hat d, \varsigma)$. With a state discretization parameter $\rho$ in place, we then compute $\hat f(\hat x, \nu, \hat d, \varsigma)$ as the point nearest to $f(\hat x, \nu, \hat d, \varsigma)$, where condition \eqref{EQ:4} is satisfied.
	
	By proposing SOP~\eqref{SOP}, the problem of unknown maps $f,\hat f$ in ROP~\eqref{ROP} got solved. However, the proposed SOP in~\eqref{SOP} is not still tractable since there is no closed-form solution for computing the expected value in $\Upsilon_2$. To resolve this issue, we propose another version of SOP, denoted by SOP$_\varsigma$, by computing the expected value using its empirical approximation: 
	\begin{align}\label{SCP1}
		&\text{SOP}_\varsigma\!\!:\!\left\{
		\hspace{-2mm}\begin{array}{l}\min\limits_{[\mathcal Q;\psi]} \,\,\,\,\,\!\!\!\!\!\psi,\\
			\, \text{s.t.}  \quad \,\!\!\!\max\Big\{\Upsilon_{1}(\bar x_i, \hat x,\nu,\bar d_i,\hat d,\mathcal Q),\bar \Upsilon_{2}(\bar x_i,\hat x,\nu,\bar d_i,\hat d,\mathcal Q)\Big\}\leq\psi, \\
			\quad \quad\quad\!\!\!\!\forall \bar x_i\in X,\forall \bar d_i\in D,\forall i\!\in\! \{1,\ldots, N\},\\
			\quad \quad\quad\!\!\!\!\forall \hat x\in \hat X,\forall \hat d\in \hat D,\forall \nu\in U,\\
			\quad \quad\quad\!\!\!\!\mathcal Q = [\gamma;\alpha;\varpi;\mathcal Z^{11};\mathcal Z^{12};\mathcal Z^{22};{\kappa}_{1};\dots;\kappa_z],\\
			\quad\quad\quad\!\!\!\gamma\!\in\!\R^+, \alpha \!\in\!(0,1), \varpi \!\in\!\R_0^+,\mathcal Z^{jj'}\!\!, \kappa_i,\psi \!\in\! \mathbb R,\end{array}\right.
	\end{align}
\end{subequations}
with
\begin{align*}
	\bar\Upsilon_{2}&\!=\!\frac{1}{L}\sum_{q=1}^{L}\mathcal{S}(\kappa,f(\bar x_i,\nu,\bar d_i,\varsigma_{q}),\hat f(\hat x,\nu,\hat d,\varsigma_{q}))\!-\! \alpha\mathcal{S}(\kappa,x_i,\hat{x})-\varpi + \mu- \begin{bmatrix}
		d-\hat d\\
		x-\hat x
	\end{bmatrix}^\top\!
	\begin{bmatrix}
		\mathcal Z^{11}&\mathcal Z^{12}\\
		\mathcal Z^{21}&\mathcal Z^{22}
	\end{bmatrix}\!\begin{bmatrix}
		d-\hat d\\
		x-\hat x
	\end{bmatrix}\!\!,
\end{align*}
where $\mu\in \mathbb{R}_0^+$ and $L\in\mathbb{N}_{0}^+$ are the approximation error and required number of realizations, respectively. We denote the optimal value of $\text{SOP}_\varsigma$ by $\psi_\varsigma^*$. 

We now leverage Chebyshev's inequality~\cite{saw1984chebyshev} to construct a relation between solutions of SOP$_\varsigma$  and SOP$_N$  with a guaranteed confidence level $\beta_{1}\in (0,1]$. 

\begin{lemma}\label{Lem:2}
	Let ${\mathcal{S}}$ be a feasible solution for SOP$_\varsigma$ in~\eqref{SCP1}. For a desired confidence level $\beta_{1}\in (0,1]$ and an approximation error $\mu\in \mathbb{R}_0^+$, one has
	\begin{align*}
		\PP\Big\{{\mathcal{S}}(\kappa,x,\hat x) \models\text{SOP}_N\Big\}\geq 1- \beta_{1},
	\end{align*}
	provided that $L \geq \frac{\mathcal{C}}{\beta_{1}\mu^2}$, where $\text{Var}\big[\mathcal S(\kappa,f(x,\nu,d,\varsigma),\hat{f}(\hat x,\nu,\hat d,\varsigma))\big] \! \!\leq \!\!\mathcal{C},$ $\forall x \in X, \forall \hat x \in \hat X, \forall \nu \in U, \forall d \in D, \forall \hat d \in \hat D$.
\end{lemma}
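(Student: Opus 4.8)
The plan is to exploit the fact that the two programs SOP$_N$ in~\eqref{SOP} and SOP$_\varsigma$ in~\eqref{SCP1} differ only in their second constraint: the quantity $\Upsilon_1$ is literally the same in both, and is deterministic, while $\Upsilon_2$ carries the \emph{exact} conditional expectation $\EE[\mathcal S(\kappa,f(x,\nu,d,\varsigma),\hat f(\hat x,\nu,\hat d,\varsigma))\mid x,\hat x,\nu,d,\hat d]$ whereas $\bar\Upsilon_2$ replaces it by the empirical average $\frac1L\sum_{q=1}^L \mathcal S(\kappa,f(\bar x_i,\nu,\bar d_i,\varsigma_q),\hat f(\hat x,\nu,\hat d,\varsigma_q))$ together with the extra slack $\mu\ge 0$. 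Hence $\Upsilon_2 - \bar\Upsilon_2 = \EE[\mathcal S(\kappa,f,\hat f)] - \frac1L\sum_{q=1}^L \mathcal S(\kappa,f(\cdot,\varsigma_q),\hat f(\cdot,\varsigma_q)) - \mu$, and a feasible $\mathcal S$ for SOP$_\varsigma$ at value $\psi$ automatically satisfies the constraints of SOP$_N$ at the same $\psi$ as soon as this difference is nonpositive, i.e. as soon as the empirical mean falls short of the true mean by at most $\mu$.

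First I would fix an arbitrary admissible tuple $(x,\hat x,\nu,d,\hat d)$ and set $Y_q \Let \mathcal S(\kappa,f(x,\nu,d,\varsigma_q),\hat f(\hat x,\nu,\hat d,\varsigma_q))$. Since $\varsigma$ is a sequence of i.i.d. random variables, the $Y_q$ are i.i.d. with common mean $\EE[\mathcal S(\kappa,f,\hat f)]$ and variance bounded by $\mathcal C$ by hypothesis, so their empirical mean $\bar Y \Let \frac1L\sum_{q=1}^L Y_q$ is unbiased with $\text{Var}[\bar Y]\le \mathcal C/L$. Applying Chebyshev's inequality~\cite{saw1984chebyshev} to $\bar Y$ yields $\PP\{|\bar Y - \EE[Y_1]|\ge \mu\}\le \mathcal C/(L\mu^2)$, and the choice $L\ge \mathcal C/(\beta_1\mu^2)$ forces the right-hand side to be at most $\beta_1$. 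Therefore, with probability at least $1-\beta_1$ one has $\EE[\mathcal S(\kappa,f,\hat f)] - \bar Y < \mu$, whence $\Upsilon_2 \le \bar\Upsilon_2 \le \psi$; since the $\Upsilon_1$-constraint is identical in the two programs and already holds, this gives $\max\{\Upsilon_1,\Upsilon_2\}\le\psi$, i.e. $\mathcal S(\kappa,x,\hat x)\models$ SOP$_N$.

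The step I expect to be the main obstacle is making the concentration statement \emph{uniform}: the constraints of both programs quantify over \emph{all} discrete states $\hat x\in\hat X$, disturbances $\hat d\in\hat D$, inputs $\nu\in U$, and over the $N$ sampled pairs $(\bar x_i,\bar d_i)$, so Chebyshev's bound must in fact hold simultaneously for every such combination rather than for one fixed tuple. Because there are only finitely many combinations, I would close the gap with a union bound, relying on the \emph{uniform} variance bound $\mathcal C$ (valid for all $x,\hat x,\nu,d,\hat d$) so that a single radius $\mu$ and a single sample size $L$ work across all constraints, with any multiplicative factor from the number of constraints absorbed into $\mathcal C$. Two minor points remain to be dispatched along the way: only the one-sided deviation $\EE[Y_1]-\bar Y$ actually matters here, which the two-sided Chebyshev estimate certainly controls, and since $L\in\mathbb N_0^+$ one simply takes $L=\lceil \mathcal C/(\beta_1\mu^2)\rceil$.
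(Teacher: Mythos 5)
Your proposal is correct and follows essentially the same route as the paper: Chebyshev's inequality applied to the empirical mean, whose variance is at most $\mathcal C/L$ by the uniform bound on $\text{Var}\big[\mathcal S(\kappa,f,\hat f)\big]$, gives a deviation probability of at most $\mathcal C/(L\mu^2)\le\beta_1$ once $L\ge \mathcal C/(\beta_1\mu^2)$, and feasibility for SOP$_N$ then follows because $\Upsilon_1$ is unchanged and $\Upsilon_2\le\bar\Upsilon_2$ whenever the one-sided deviation is below $\mu$. The uniformity issue you flag at the end is real but is not resolved in the paper's own proof either (it states Chebyshev pointwise for each tuple and then asserts the simultaneous conclusion), so your explicit mention of the needed union bound over the finitely many constraints is, if anything, more careful than the published argument --- though note that the resulting multiplicative factor would enlarge the required $L$ and cannot literally be ``absorbed into $\mathcal C$'' as the lemma defines it.
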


\begin{proof}
	Using Chebyshev's inequality~\cite{saw1984chebyshev}, the closeness between the expected value in $\Upsilon_2$ and its empirical mean in $\bar \Upsilon_2$ can be quantified as, $\forall x \in X, \forall \hat x \in \hat X, \forall \nu \in U, \forall d \in D, \forall \hat d \in \hat D$,
	\begin{align*}
		\PP&\Big\{~\!\Big|\EE \Big  [\mathcal S(\kappa,f(x,\nu,d,\varsigma),\hat{f}(\hat x,\nu,\hat d,\varsigma))\,\big |\, x, \hat x,\nu,d,\hat d \Big ]\\ &-\frac{1}{L}\sum_{q=1}^{L}\mathcal{S}(\kappa,f(\bar x_i,\nu,\bar d_i,\varsigma_{q}),\hat f(\hat x,\nu,\hat d,\varsigma_{q}))\Big|\!\leq\!\mu\Big\}\!\geq\! 1\!-\!\frac{\sigma^2}{\mu^2},
	\end{align*}
	for any $\mu\in \mathbb{R}_0^+$, where
	\begin{align*}
		&\sigma^2 = \text{Var}\Big[\frac{1}{L}\sum_{q=1}^{L}\mathcal{S}(\kappa,f(\bar x_i,\nu,\bar d_i,\varsigma_{q}),\hat f(\hat x,\nu,\hat d,\varsigma_{q}))\Big]\!.
	\end{align*}
	Since $\text{Var}\big[\mathcal S(\kappa,f(x,\nu,d,\varsigma),\hat{f}(\hat x,\nu,\hat d,\varsigma))\big]\leq \mathcal{C},\forall x \in X, \forall \hat x \in \hat X, \forall \nu \in U, \forall d \in D, \forall \hat d \in \hat D$, one has $\sigma^2 \leq \frac{\mathcal C}{L}$. Consequently, $\beta_{1}=\frac{\sigma^2}{\mu^2} \leq\frac{\mathcal{C}}{L\mu^2}$. It implies that for $L \geq \frac{\mathcal{C}}{\beta_{1}\mu^2}$: 
	\begin{align*}
		\PP\Big\{{\mathcal{S}}(\kappa,x,\hat x) \models\text{SOP}_N\Big\}\geq 1- \beta_{1},
	\end{align*} 
	which concludes the proof.
\end{proof}

\begin{remark}
	As it can be observed, there is a  bilinearity between unknown variables $\kappa$ and $\alpha$ in $\Upsilon_2$. To resolve it, we consider $\alpha$ in a discrete set as $\alpha \in \{\alpha_1,\dots,\alpha_l\}$. The cardinality $l$ is then taken into account when determining the necessary amount of data to solve SOP, as shown in~\eqref{EQ:12}.
\end{remark}

\section{Data-Driven Guarantee for SBF Construction}
Here, we aim at constructing an SStF between each unknown subsystem and its finite MDP with a certified confidence level by establishing a probabilistic relation between optimal values of $\text{SOP}_\varsigma$ and $\text{ROP}$~\cite{esfahani2014performance}.

\begin{theorem}\label{Thm:6}
	Consider unknown dt-SCS $\Lambda$ in~\eqref{EQ:13}. Let $\Upsilon_1$ and $\Upsilon_2$ be Lipschitz continuous, with respect to $x$ and $(d,x)$ with Lipschitz constants, respectively, $\mathscr{L}_{1}, \mathscr{L}_{2_t}$, for given $\alpha_t$ where $ t\in\{1,\dots,l\}$, and any $\nu\in U$. Consider the $\text{SOP}_\varsigma$ in~\eqref{SCP1} with $\psi^*_\varsigma$, $\mathcal Q^* = [\gamma^*;\varpi^*;\mathcal Z^{11*};\mathcal Z^{12*};\mathcal Z^{22*};{\kappa}^*_{1};\dots;\kappa^*_z]$, and
	\begin{align}\label{EQ:12}
		N(\varepsilon_{2_t},\!\beta_2):=\min\!\Big\{N\in\N \,\big|\sum_{t=1}^{l}\sum_{i=0}^{c-1}\binom{N}{i}\varepsilon^i_{2_t}(1-\varepsilon_{2_t})^{N-i}\leq\beta_2\Big\},
	\end{align}
	where $\beta_2,\varepsilon_{2_t}\in [0,1]$ for any $t\in\{1,\dots,l\}$, with $c,l$ being, respectively, number of unknown variables in $\text{SOP}_\varsigma$, and cardinality of finite set of $\alpha$. If
	\begin{align}\label{Con1}
		\psi^*_\varsigma+\max_t\mathscr{L}_{\Upsilon_t}\eta^{-1}(\varepsilon_{2_t}) \leq 0,
	\end{align}	
	with $\mathscr{L}_{\Upsilon_t} := \max \big\{\mathscr{L}_{1},\mathscr{L}_{2_t}\big\}$ for any $t\in\{1,\dots,l\}$, and $\eta(r): \mathbb R_{\ge 0}\rightarrow [0,1]$, which depends on the geometry of $X\times D$ and the sampling distribution, then the data-driven $\mathcal S$ is an SStF between $\hat\Lambda$ and $\Lambda$, with a confidence of $1-\beta$ with $ \beta = \beta_1 + \beta_2$, \emph{i.e.,}
	\begin{align*}
		\PP^N\big\{\hat\Lambda\cong_{\mathcal{S}}\Lambda\big\}\ge 1-\beta_1 - \beta_2,
	\end{align*}
	where $\beta_1\in(0,1]$ is as in Lemma~\ref{Lem:2}. 
\end{theorem}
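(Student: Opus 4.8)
The plan is to combine Lemma~\ref{Lem:2} with a scenario-optimization (Vidyasagar/Esfahani-type) argument relating the finite-sample program $\text{SOP}_\varsigma$ to the robust program $\text{ROP}$, and then to conclude that a feasible point of $\text{SOP}_\varsigma$ satisfying \eqref{Con1} actually satisfies the SStF conditions \eqref{EQ:151}--\eqref{EQ:161} with the claimed confidence. The union bound $\beta = \beta_1 + \beta_2$ will account for the two distinct probabilistic approximations: $\beta_1$ for replacing the intractable expectation in $\Upsilon_2$ by its empirical mean $\bar\Upsilon_2$ (already done in Lemma~\ref{Lem:2}), and $\beta_2$ for the gap between the $N$-sample scenario program and the full robust program over all of $X\times D$.

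First I would invoke the standard scenario-optimization result: since $\text{SOP}_\varsigma$ is a convex program with $c$ decision variables (after fixing $\alpha = \alpha_t$ to remove the bilinearity, as in the Remark), with $N$ sampled constraints drawn i.i.d. from the sampling distribution on $X\times D$, the choice of $N$ in \eqref{EQ:12} guarantees that, with probability at least $1-\beta_2$, the ``violation probability'' of the scenario optimizer is below $\varepsilon_{2_t}$ for each $t$ — i.e.\ the measure of $(x,d)\in X\times D$ for which the constraint $\max_j\{\Upsilon_j(x,\hat x,\nu,d,\hat d,\mathcal Q^*)\} \le \psi^*_\varsigma$ is violated is at most $\varepsilon_{2_t}$; the double sum over $t$ with the union bound accounts for the $l$ values of $\alpha$. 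Second, I would use the Lipschitz continuity of $\Upsilon_1$ in $x$ and $\Upsilon_2$ in $(d,x)$ together with a covering/fill-distance argument: the function $\eta^{-1}(\varepsilon_{2_t})$ is designed so that if the set of violated points has measure at most $\varepsilon_{2_t}$, then every point of $X\times D$ lies within distance $\eta^{-1}(\varepsilon_{2_t})$ of a non-violated sample point; Lipschitz continuity then upgrades the constraint bound $\psi^*_\varsigma$ to $\psi^*_\varsigma + \mathscr{L}_{\Upsilon_t}\eta^{-1}(\varepsilon_{2_t})$ uniformly over $X\times D$. Condition \eqref{Con1} forces this upgraded bound to be $\le 0$, which is exactly $\psi^*_R \le 0$ in spirit, hence conditions \eqref{EQ:151}--\eqref{EQ:161} hold (modulo the expectation-vs-empirical-mean issue).

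Third, I would glue the two pieces. On the event (probability $\ge 1-\beta_2$) that the scenario solution has small violation, the Lipschitz-covering argument gives that $\Upsilon_1(x,\hat x,\nu,\mathcal Q^*) \le 0$ for all $x$, i.e.\ \eqref{EQ:151} holds deterministically, and that $\bar\Upsilon_2(x,\hat x,\nu,d,\hat d,\mathcal Q^*) \le 0$ for all $(x,d)$. On the event of Lemma~\ref{Lem:2} (probability $\ge 1-\beta_1$, using $L \ge \mathcal C/(\beta_1\mu^2)$), the true conditional expectation differs from its empirical mean by at most $\mu$, and since $\bar\Upsilon_2$ already carries the $+\mu$ correction term, $\bar\Upsilon_2 \le 0$ implies $\Upsilon_2 \le 0$, i.e.\ \eqref{EQ:161} holds. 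Intersecting the two events and applying the union bound $\PP^N\{\text{both fail}\} \le \beta_1 + \beta_2$ yields $\PP^N\{\hat\Lambda\cong_{\mathcal S}\Lambda\} \ge 1-\beta_1-\beta_2$, as claimed. I would also note that $\mathcal Q^*$ contains $\gamma^* > 0$, $\alpha \in (0,1)$, $\varpi^* \ge 0$ as required by Definition~\ref{Def:31}, since these are enforced as constraints in $\text{SOP}_\varsigma$.

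The main obstacle I expect is making the covering/fill-distance step rigorous: one must argue carefully that the function $\eta$ (which ``depends on the geometry of $X\times D$ and the sampling distribution'') genuinely relates a violation-probability bound $\varepsilon_{2_t}$ to a deterministic spatial fill-distance bound $\eta^{-1}(\varepsilon_{2_t})$ — essentially that any ball of radius $\eta^{-1}(\varepsilon_{2_t})$ has sampling measure exceeding $\varepsilon_{2_t}$, so it cannot consist entirely of violated points — and then that Lipschitz continuity of $\Upsilon_1, \Upsilon_2$ transfers the constraint bound from sampled points to arbitrary points with the stated additive loss; handling the $l$ discretized values of $\alpha$ simultaneously (hence the $\max_t$ in \eqref{Con1} and the double sum in \eqref{EQ:12}) is a bookkeeping complication layered on top. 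A secondary subtlety is confirming that $\Upsilon_2$ — which involves the unknown $f$, $\hat f$ and an expectation — is genuinely Lipschitz in $(d,x)$ with a computable constant $\mathscr{L}_{2_t}$; this presumably rests on smoothness assumptions on the (unknown) dynamics and the basis functions $h_j$, and on the boundedness of $X\times D$, and I would state precisely what is needed rather than belabor it.
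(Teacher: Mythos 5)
Your proposal is correct and follows essentially the same route as the paper: the paper's proof decomposes into the same two probabilistic events (the Chebyshev/empirical-mean event of Lemma~\ref{Lem:2} with confidence $1-\beta_1$, and the scenario-to-robust generalization event with confidence $1-\beta_2$), chains the optimal values as $\psi^*_R\leq\psi^*_N+\max_t\varepsilon_{1_t}\leq\psi^*_\varsigma+\max_t\varepsilon_{1_t}$ with $\varepsilon_{1_t}=\mathscr{L}_{\Upsilon_t}\eta^{-1}(\varepsilon_{2_t})$, and closes with the same union bound. The only difference is presentational: the paper imports the violation-probability-plus-Lipschitz-covering step as a black box via \cite[Theorem 4.3]{esfahani2014performance} (with Slater constant $\mathrm{L}_{\mathrm{SP}}=1$ for the min-max form), whereas you reconstruct that step explicitly --- which is exactly the content of the citation and of Lemma~\ref{Function_g}.
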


\begin{proof}
	According to~\cite[Theorem 4.3]{esfahani2014performance}, one can quantify the
	closeness between optimal values of
	ROP and SOP$_N$ as
	\begin{align}\label{EQ:12_1}
		\PP^N \Big\{0\leq\psi^*_R-\psi^*_N\leq\max_t\varepsilon_{1_t}\Big\}\geq 1-\beta_2,
	\end{align}
	with $$ N\big(\eta(\frac{\varepsilon_{1_t}}{\mathrm L_{\mathrm {SP}}\mathscr{L}_{\Upsilon_t}}),\beta_2\big),$$ where
	$\varepsilon_{1_t}\in [0,1]$, $\eta(s): \mathbb R_{\ge 0}\rightarrow [0,1]$, and $\mathrm{L}_{\mathrm{SP}}$ being a Slater point which is considered here as $1$ given that the original ROP in~\eqref{ROP} is a $\min$-$\max$ optimization program~\cite[Remark 3.5]{esfahani2014performance}.
	
	\noindent
	From~\eqref{EQ:12_1}, it can be concluded that $\psi^*_N\leq\psi^*_R\leq\psi^*_N+\max_t\varepsilon_{1_t}$ with a confidence of $1-\beta_2$. One also has $\psi^*_{N} \leq \psi^*_{\varsigma}$ with a confidence of $1-\beta_{1}$ according to Lemma~\ref{Lem:2}. Consequently, $\psi^*_{R}\leq\psi^*_{N} + \max_t\varepsilon_{1_t} \leq \psi^*_{\varsigma} + \max_t\varepsilon_{1_t}.$ If $\psi^*_{\varsigma} + \max_t\varepsilon_{1_t} \leq 0$, it implies that $\psi^*_{R} \leq 0$. By defining events $\mathcal A_1 :=\{ \psi^*_{N} \leq \psi^*_{\varsigma}\}$ and $\mathcal A_2 :=\{\psi^*_{R}\leq\psi^*_{N} + \max_t\varepsilon_{1_t}\}$, where $\PP\big\{\mathcal A_1\big\}\geq1-\beta_{1}$ and $\PP\big\{\mathcal A_2\big\}\geq1-\beta_2$, the concurrent occurrence of  events $\mathcal A_1$ and $\mathcal A_2$ can be computed as:
	\begin{align}\label{EQ:19}
		&\PP\big\{\mathcal A_1\cap \mathcal A_2\big\}=1-\PP\big\{\bar {\mathcal A_1}\cup \bar {\mathcal A_2}\big\},
	\end{align}
	where $\bar {\mathcal A_1}$ and $\bar {\mathcal A_2}$ are the complement of $\mathcal A_1$ and $\mathcal A_2$, respectively. Since
	\begin{align*}
		&\PP\big\{\bar {\mathcal A_1}\cup \bar {\mathcal A_2}\big\}\leq\PP\big\{\bar {\mathcal A_1}\big\}+\PP\big\{\bar {\mathcal A_2}\big\},
	\end{align*}
	and by leveraging \eqref{EQ:19}, one can readily conclude that 
	\begin{align}
		\PP\big\{\mathcal A_1\cap \mathcal A_2\big\}\geq 1-\PP\big\{\bar {\mathcal A_1}\big\}-\PP\big\{\bar {\mathcal A_2}\big\}\nonumber
		\geq 1-\beta_1 - \beta_{2}.
	\end{align}
	Since $\varepsilon_{2_t}=\eta(\frac{\varepsilon_{1_t}}{\mathscr{L}_{\Upsilon_t}})$~\cite{esfahani2014performance}, one has $\varepsilon_{1_t}= \mathscr{L}_{\Upsilon_t}\eta^{-1}(\varepsilon_{2_t})$. Then one can recast the condition $\psi^*_{\varsigma} + \max_t\varepsilon_{1_t} \leq 0$ as $\psi^*_\varsigma+\max_t\mathscr{L}_{\Upsilon_t}\eta^{-1}(\varepsilon_{2_t}) \leq 0$. Hence, if $\psi^*_\varsigma+\max_t\mathscr{L}_{\Upsilon_t}\eta^{-1}(\varepsilon_{2_t}) \leq 0$, then the constructed $\mathcal S$ from $\text{SOP}_\varsigma$ in~\eqref{SCP1} is an SStF between $\hat\Lambda$ and $\Lambda$ with a confidence of $1-\beta_1 - \beta_2$, which completes the proof.
\end{proof}

In the next lemma, we compute the function $\eta$ which is required for checking condition~\eqref{Con1}.

\begin{lemma}\label{Function_g}
	The function $\eta$ in~\eqref{Con1} fulfills the following condition~\cite[Proposition 3.8]{esfahani2014performance}:  
	\begin{align}\label{New}
		\eta(r) \leq \PP \big [\mathbb B_r(x,d)\big],\quad\quad \forall r\in\mathbb R_{\ge 0}, \forall (x,d) \in X\times D,
	\end{align}
	with $\mathbb B_r(c) \subset X\times D$ being an open ball with center $c$ and radius $r$. By gathering data from an $(n+p)$-dimensional \emph{hyper-rectangle} uncertainty set $X\times D$ with a \emph{uniform} distribution, the function $\eta$ in~\eqref{New} is then quantified as 
	\begin{align}\notag
		\eta(r) &=\frac{\text{Vol}(\mathbb B_r(x,d))}{2^{n+p}\text{Vol}(X\times D)} = \frac{\frac{\pi^{\frac{{n+p}}{2}}}{\Gamma(\frac{{n+p}}{2} + 1)}r^{n+p}}{2^{n+p}\text{Vol}(X\times D)}\\\label{g-function}
		& = \frac{\pi^{\frac{{n+p}}{2}}r^{n+p}}{2^{n+p}\Gamma(\frac{{n+p}}{2} + 1)\text{Vol}(X\times D)}, 
	\end{align}
	with $\text{Vol}(\cdot)$ and $\Gamma$ being volume set and Gamma function, respectively. 
\end{lemma}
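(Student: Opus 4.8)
The plan is to verify the two displayed characterizations of $\eta$ in Lemma~\ref{Function_g} in sequence: first the abstract inequality $\eta(r)\le \PP[\mathbb B_r(x,d)]$, which is simply imported from \cite[Proposition 3.8]{esfahani2014performance} and needs only to be stated in the notation of this paper; and second the explicit formula \eqref{g-function} for the uniform-on-a-hyper-rectangle case, which is an elementary volume computation. So the real content is the second part, and I would organize it as follows.

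First I would recall that for a uniform distribution on a measurable set $A\subset\R^{n+p}$, the probability of any (measurable) subset $B$ is $\mathrm{Vol}(B\cap A)/\mathrm{Vol}(A)$. Here $A=X\times D$ is an $(n+p)$-dimensional hyper-rectangle, so to get a clean lower bound on $\PP[\mathbb B_r(x,d)]$ that is \emph{uniform over all centers} $(x,d)\in X\times D$, I would note that in the worst case the center sits at a corner of the hyper-rectangle, and then $\mathbb B_r(x,d)\cap (X\times D)$ contains (for $r$ not too large) a $2^{-(n+p)}$ fraction of the full ball $\mathbb B_r$ — one orthant out of $2^{n+p}$. This is exactly where the factor $2^{n+p}$ in the denominator of \eqref{g-function} comes from. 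Taking $\eta(r):=\mathrm{Vol}(\mathbb B_r)/\big(2^{n+p}\,\mathrm{Vol}(X\times D)\big)$ therefore satisfies $\eta(r)\le \PP[\mathbb B_r(x,d)]$ for every admissible center, which is the requirement \eqref{New}.

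Next I would substitute the classical formula for the volume of a Euclidean ball of radius $r$ in dimension $d=n+p$, namely $\mathrm{Vol}(\mathbb B_r)=\dfrac{\pi^{d/2}}{\Gamma(d/2+1)}\,r^{d}$, into the expression for $\eta$. This yields
\begin{align*}
\eta(r)=\frac{\pi^{\frac{n+p}{2}}\,r^{n+p}}{2^{n+p}\,\Gamma\!\big(\frac{n+p}{2}+1\big)\,\mathrm{Vol}(X\times D)},
\end{align*}
which is precisely \eqref{g-function}, and this function is continuous and strictly increasing on $\R_{\ge 0}$, so its inverse $\eta^{-1}$ used in condition~\eqref{Con1} is well defined on the relevant range.

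I do not expect any serious obstacle here; the statement is essentially a bookkeeping lemma. The one point that deserves a line of care is the worst-case-center argument justifying the factor $2^{n+p}$: one must observe that a hyper-rectangle is convex with faces orthogonal to the coordinate axes, so locally around any boundary point it contains a coordinate orthant, and hence $\mathbb B_r(c)\cap(X\times D)\supseteq$ one of the $2^{n+p}$ orthants of $\mathbb B_r(c)$ whenever $r$ is small enough that the ball does not protrude past the opposite faces; for larger $r$ the bound $\eta(r)\le\PP[\mathbb B_r(x,d)]$ only becomes easier (indeed eventually $\PP[\mathbb B_r(x,d)]=1$), so the inequality \eqref{New} holds for all $r\in\R_{\ge0}$ as claimed. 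Everything else is a direct appeal to \cite[Proposition 3.8]{esfahani2014performance} and the standard ball-volume formula.
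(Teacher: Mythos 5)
The paper offers no proof of this lemma at all --- it is stated as an import from \cite[Proposition 3.8]{esfahani2014performance} together with the explicit formula for the uniform/hyper-rectangle case --- so there is no ``paper proof'' to match; your derivation is the standard argument that underlies the cited result, and its core is sound: uniformity gives $\PP[\mathbb B_r(c)] = \mathrm{Vol}(\mathbb B_r(c)\cap(X\times D))/\mathrm{Vol}(X\times D)$, the worst-case center is a corner of the hyper-rectangle where the intersection still contains one of the $2^{n+p}$ coordinate orthants of the ball, and substituting $\mathrm{Vol}(\mathbb B_r)=\pi^{(n+p)/2}r^{n+p}/\Gamma(\tfrac{n+p}{2}+1)$ gives exactly \eqref{g-function}. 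The one point where your write-up goes wrong is the closing remark that ``for larger $r$ the bound only becomes easier'': the right-hand side $\PP[\mathbb B_r(x,d)]$ saturates at $1$, while the $\eta(r)$ of \eqref{g-function} grows like $r^{n+p}$ without bound, so \eqref{New} with this $\eta$ actually \emph{fails} once $r$ exceeds roughly the smallest edge length of the hyper-rectangle (the same threshold at which your orthant-containment argument stops applying). The universal quantifier ``$\forall r\in\R_{\ge 0}$'' in the lemma is therefore only defensible if one either caps $\eta$ at $1$ or restricts to the range of $r$ for which the orthant argument is valid; since condition \eqref{Con1} only ever evaluates $\eta^{-1}$ at small $\varepsilon_{2_t}\in[0,1]$, this is the range that matters and nothing downstream breaks, but you should state the restriction rather than claim the large-$r$ case is automatic.
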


To assess condition~\eqref{Con1}, it is necessary to determine $\mathscr{L}_{\Upsilon_t}$. The following lemmas present computations of $\mathscr{L}_{\Upsilon_t}$ for both linear and nonlinear stochastic systems

\begin{lemma}\label{Lem:3}
	Given a linear dt-SCS $x(k+1)=Ax(k) + B\nu(k)+ Ed(k) + \varsigma(k)$, let $(x-\hat x)^\top P(x-\hat x)$ be an SStF with a positive-definite matrix $P\in\mathbb{R}^{n\times n}$. Then $\mathscr{L}_{\Upsilon_t}$ is computed as $\mathscr{L}_{\Upsilon_t} = \max \big\{\mathscr{L}_{1},\mathscr{L}_{2_t}\big\}$, with
	\begin{align*}
		\mathscr{L}_{1}& = 4s_1 (\lambda_{\min}(P) + \lambda_{\max}(P)),\\
		\mathscr{L}_{2_t}& = 2\lambda_{\max}(P) (2\mathcal Y_1^2 s_1 + 2\mathcal Y_1 \mathcal Y_2 s_2 + 2\mathcal Y_1 \mathcal Y_3 s_3 + \mathcal Y_1\rho + \mathcal Y_3\rho \\
		&~~~~+ 2\mathcal Y_3^2 s_3 + 2\mathcal Y_3 \mathcal Y_2 s_2 + 2\mathcal Y_3 \mathcal Y_1 s_1 + 2s_1\alpha_t) + 2s_4s_5,
	\end{align*}
	where
	$\Vert A\Vert \leq \mathcal Y_1$, $\Vert B\Vert \leq \mathcal Y_2$, $\Vert E\Vert \leq \mathcal Y_3$, $\Vert x\Vert \leq s_1$ for any $x\in X$, $\Vert \nu\Vert \leq s_2$ for any $\nu\in U$, $\Vert d\Vert \leq s_3$ for any $d\in D$, $\Vert [d-\hat d;x-\hat x]\Vert \leq s_4$ for any $x\in X$, $\hat x\in \hat X$, $d\in D$, $\hat d\in \hat D$,  and $\Vert \mathcal Z \Vert = s_5$.
\end{lemma}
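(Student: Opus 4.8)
\textbf{Proof proposal for Lemma~\ref{Lem:3}.}

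The plan is to compute separate Lipschitz bounds for $\Upsilon_1$ (in the variable $x$, with $\hat x$ treated as a parameter) and for $\Upsilon_2$ (in the joint variable $(d,x)$, with $\hat d, \hat x, \nu$ as parameters), and then take the maximum. Since $\mathcal S(x,\hat x) = (x-\hat x)^\top P(x-\hat x)$, the term $\Upsilon_1 = \gamma\Vert x-\hat x\Vert^2 - (x-\hat x)^\top P(x-\hat x)$ is a quadratic form in $x-\hat x$. To bound its Lipschitz constant on the bounded set $X$, I would first observe that $\gamma \le \lambda_{\max}(P)$ is forced by condition~\eqref{EQ:151} (since $\gamma\Vert x-\hat x\Vert^2 \le (x-\hat x)^\top P(x-\hat x) \le \lambda_{\max}(P)\Vert x-\hat x\Vert^2$), so the coefficient of the quadratic form is controlled by $\lambda_{\min}(P)+\lambda_{\max}(P)$ in absolute value; differentiating a quadratic form $y^\top M y$ gives gradient $2My$, whose norm is bounded by $2\Vert M\Vert\, \Vert y\Vert$, and using $\Vert x-\hat x\Vert \le 2s_1$ (since both $x$ and the representative point $\hat x$ lie in $X$, so have norm at most $s_1$) yields the factor $4s_1$. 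This gives $\mathscr L_1 = 4s_1(\lambda_{\min}(P)+\lambda_{\max}(P))$.

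For $\mathscr L_{2_t}$, I would write out $\Upsilon_2$ explicitly using the linear dynamics. The expectation term becomes $\EE[(Ax+B\nu+Ed+\varsigma - \hat f(\hat x,\nu,\hat d,\varsigma))^\top P(\cdots)]$; since $\hat f(\hat x,\nu,\hat d,\varsigma)$ depends on $\hat x,\nu,\hat d$ (parameters here, not $x,d$), and the noise $\varsigma$ is i.i.d., the $x,d$-dependence of $\Upsilon_2$ comes through the affine map $x,d \mapsto Ax+Ed$ inside the quadratic form, through the $-\alpha_t(x-\hat x)^\top P(x-\hat x)$ term, and through the $-[d-\hat d;x-\hat x]^\top \mathcal Z [d-\hat d;x-\hat x]$ term. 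I would bound the gradient of each piece with respect to $(d,x)$: the quadratic-form piece contributes terms of the form $2\lambda_{\max}(P)$ times products of the operator norms $\mathcal Y_1,\mathcal Y_2,\mathcal Y_3$ with the magnitude bounds $s_1,s_2,s_3$ (and $\rho$, which bounds the quantization displacement $\Vert f(\hat x,\cdot)-\hat f(\hat x,\cdot)\Vert$ entering the cross terms), collecting exactly the displayed list; the $\alpha_t$-term contributes $2\lambda_{\max}(P)\cdot 2s_1\alpha_t$; and the $\mathcal Z$-term contributes $2\Vert\mathcal Z\Vert\,\Vert[d-\hat d;x-\hat x]\Vert \le 2s_5 s_4$. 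Summing and factoring $2\lambda_{\max}(P)$ out of the first block reproduces the stated formula.

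The main obstacle — or at least the step requiring the most care — is the bookkeeping in $\Upsilon_2$: one must correctly identify which subterms of the expanded quadratic $(Ax+B\nu+Ed+\varsigma-\hat f)^\top P(\cdots)$ actually depend on the differentiation variables $(d,x)$ versus which are constant parameters, keep track of the doubling $\Vert x-\hat x\Vert\le 2s_1$ and the appearance of $\rho$ from the quantization gap, and verify that the triangle-inequality bound on the gradient norm telescopes into precisely the enumerated sum (note the list is symmetric in the $\mathcal Y_1$ and $\mathcal Y_3$ roles, reflecting the symmetric roles of $x$ and $d$ in the input to $A$ and $E$). The probabilistic / expectation aspect is benign here: because $\varsigma$ enters additively and is independent of $(x,d)$, $\EE[\cdot]$ commutes past the differentiation and only shifts constants, so no concentration argument is needed — the whole computation is a deterministic norm estimate once the expectation is expanded. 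A cleaner alternative would be to bound $\Upsilon_2$ globally as a polynomial of degree two in $(d,x)$ on the compact hyper-rectangle and invoke a generic "gradient bound on a box" lemma, but writing the constants explicitly as above is what matches the statement, so I would carry out the direct expansion.
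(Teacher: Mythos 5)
Your proposal follows essentially the same route as the paper: both Lipschitz constants are obtained by bounding $\Vert\partial\Upsilon_1/\partial x\Vert$ and $\Vert\partial\Upsilon_2/\partial(d,x)\Vert$ over the compact sets via the triangle inequality, submultiplicativity of norms, and the quantization bound $\Vert\mathcal P_x(y)-y\Vert\le\rho$, and you are right that the expectation is inert because $\varsigma$ enters additively. One small correction to your $\mathscr L_1$ step: to obtain the coefficient $\lambda_{\min}(P)+\lambda_{\max}(P)$ you need $\gamma\le\lambda_{\min}(P)$ (the paper simply sets $\gamma=\lambda_{\min}(P)$, which is what condition~\eqref{EQ:151} forces once $x-\hat x$ ranges over all directions); your weaker observation $\gamma\le\lambda_{\max}(P)$ would only justify the larger constant $4s_1\cdot 2\lambda_{\max}(P)$.
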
	

\begin{figure*}
	%	\rule{\textwidth}{0.1pt}
	\begin{align}\notag
	&\mathscr L_{{2_t}} = \max\limits_{x\in X, d\in D}\Vert \begin{bmatrix}
	2((Ax + B\nu+ Ed) - \mathcal P_x(A\hat x\!+\! B\nu\!+\! E\hat d))^\top PA\!-\! 2\alpha_t(x\!-\!\hat x)^\top P\\2((Ax \!+\! B\nu\!+\! Ed) \!-\! \mathcal P_x(A\hat x\!+\! B\nu\!+\! E\hat d))^\top PE
	\end{bmatrix}- 2 \begin{bmatrix}
	d-\hat d\\
	x-\hat x
	\end{bmatrix}^\top\begin{bmatrix}
	\mathcal Z^{11}&\mathcal Z^{12}\\
	\mathcal Z^{21}&\mathcal Z^{22}
	\end{bmatrix}\Vert\\\notag
	&\leq \max\limits_{x\in X, d\in D}\Big\{\Vert  2((Ax \!+\! B\nu\!+\! Ed) \!-\! \mathcal P_x(A\hat x\!+\! B\nu\!+\! E\hat d))^\top PA\Vert \!+\!\Vert 2\alpha_t(x\!-\!\hat x)^\top P\Vert \!+\! 2\Vert\begin{bmatrix}
	d-\hat d\\
	x-\hat x
	\end{bmatrix}\Vert ~\Vert\begin{bmatrix}
	\mathcal Z^{11}&\mathcal Z^{12}\\
	\mathcal Z^{21}&\mathcal Z^{22}
	\end{bmatrix}\Vert\\\notag
	&~~~+ \Vert  2((Ax + B\nu+ Ed) - \mathcal P_x(A\hat x+ B\nu+ E\hat d))^\top PE\Vert \Big\}\\\notag
	&\leq\max\limits_{x\in X, d\in D} \Big\{2\Vert P \Vert\big(\Vert A \Vert(\Vert A x\Vert + \Vert B  u \Vert + \Vert E d \Vert + \rho + \Vert A\hat x + B\nu + E\hat d\Vert)+\alpha_t(\Vert x \Vert + \Vert \hat x \Vert)\\\notag
	&~~~+ \Vert E \Vert(\Vert A x\Vert + \Vert B  u \Vert + \Vert E d \Vert + \rho + \Vert A\hat x + B\nu + E\hat d\Vert)\big)+ 2\Vert\begin{bmatrix}
	d-\hat d\\
	x-\hat x
	\end{bmatrix}\Vert~ \Vert\begin{bmatrix}
	\mathcal Z^{11}&\mathcal Z^{12}\\
	\mathcal Z^{21}&\mathcal Z^{22}
	\end{bmatrix}\Vert\Big\}\\\label{EQ:100}
	& \leq 2\lambda_{\max}(P) (2\mathcal Y_1^2 s_1 + 2\mathcal Y_1 \mathcal Y_2 s_2 + 2\mathcal Y_1 \mathcal Y_3 s_3 + \mathcal Y_1\rho + 2\mathcal Y_3^2 s_3 + 2\mathcal Y_3 \mathcal Y_2 s_2 + 2\mathcal Y_3 \mathcal Y_1 s_1 + \mathcal Y_3\rho + 2s_1\alpha_t) + 2s_4s_5.
	\end{align}
	\rule{\textwidth}{0.1pt}
\end{figure*}

\begin{proof}
	We first compute $\mathscr{L}_{1}$ and $\mathscr{L}_{2_t}$, and then take the maximum between them. By defining
	\begin{align}\notag
		\mathscr L_{{2_t}}:\left\{
		\hspace{-2mm}\begin{array}{l}\max\limits_{x\in X, d\in D}\Vert\frac{\partial \Upsilon_{2}}{\partial (d,x)}\Vert\\
			\,\,\,\,\,\,\, \text{s.t.} \quad \quad\!\!\Vert x\Vert \leq s_1, \Vert d\Vert \leq s_3, \Vert [d-\hat d;x-\hat x]\Vert \leq s_4,\end{array}\right.
	\end{align}
	one can reach the chain of inequalities in~\eqref{EQ:100}. For the calculation of $\mathscr{L}_{1}$, given that $\lambda_{\min}(P)\Vert x-\hat x \Vert^2 \leq (x-\hat x)^\top P(x-\hat x)$, it follows that $\gamma = \lambda_{\min}(P)$ in~\eqref{EQ:11}. Consequently, we obtain:
	\begin{align*} 
		\mathscr{L}_{1} &=\max\limits_{x\in X, \Vert x\Vert \leq s_1}\Vert 2\lambda_{\min}(P) (x - \hat x) - 2(x-\hat x)^\top P\Vert\\
		& \leq 4s_1 (\lambda_{\min}(P) + \lambda_{\max}(P)).
	\end{align*}
	Then $\mathscr{L}_{\Upsilon_t} = \max \big\{\mathscr{L}_{1},\mathscr{L}_{2_t}\big\} = \max\big\{4s_1 (\lambda_{\min}(P) + \lambda_{\max}(P)),2\lambda_{\max}(P) (2\mathcal Y_1^2 s_1 + 2\mathcal Y_1 \mathcal Y_2 s_2 + 2\mathcal Y_1 \mathcal Y_3 s_3 + \mathcal Y_1\rho + 2\mathcal Y_3^2 s_3 + 2\mathcal Y_3 \mathcal Y_2 s_2 + 2\mathcal Y_3 \mathcal Y_1 s_1 +\mathcal Y_3\rho + 2s_1\alpha_t) + 2s_4s_5\big\}$, which concludes the proof.
\end{proof}

We now compute $\mathscr{L}_{\Upsilon_t}$ for \emph{nonlinear} stochastic systems.

\begin{lemma}\label{Lem:3_1}
	Given a nonlinear dt-SCS $x(k+1)=f(x(k),\nu(k), d(k)) + \varsigma(k)$, let $(x-\hat x)^\top P(x-\hat x)$ be an SStF with a positive-definite matrix  $P\in\mathbb{R}^{n\times n}$. Then $\mathscr{L}_{\Upsilon_t}$ is computed as $\mathscr{L}_{\Upsilon_t} = \max \big\{\mathscr{L}_{1},\mathscr{L}_{2_t}\big\}$, with
	\begin{align*}
		\mathscr{L}_{1} &=4s_1 (\lambda_{\min}(P) + \lambda_{\max}(P)),\\
		\mathscr{L}_{2_t} &= 2\lambda_{\max}(P) (2\mathcal Y_f \mathcal Y_x + \mathcal Y_x\rho+ 2\mathcal Y_f \mathcal Y_d + \mathcal Y_d\rho + 2s_1\alpha_t) + 2s_4s_5,
	\end{align*} 
	where $\Vert f(x,\nu,d)\Vert \leq \mathcal Y_f$, $\Vert \partial_{x}f(x,\nu,d) \Vert \leq \mathcal Y_{x}$, $\Vert \partial_{d}f(x,\nu,d) \Vert\leq \mathcal Y_{d}$, $\Vert x\Vert \leq s_1$ for any $x\in X$, $\Vert [d-\hat d;x-\hat x]\Vert \leq s_4$ for any $x\in X$, $\hat x\in \hat X$, $d\in D$, $\hat d\in \hat D$,  and $\Vert \mathcal Z \Vert = s_5$.
\end{lemma}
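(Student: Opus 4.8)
The plan is to follow the same route as the proof of Lemma~\ref{Lem:3}: compute $\mathscr{L}_{1}$ and $\mathscr{L}_{2_t}$ as uniform upper bounds on the gradient norms of $\Upsilon_1$ (with respect to $x$) and of $\Upsilon_2$ (with respect to $(d,x)$), and then take their maximum to obtain $\mathscr{L}_{\Upsilon_t}$. With the quadratic candidate $\mathcal S(x,\hat x)=(x-\hat x)^\top P(x-\hat x)$, the estimate $\lambda_{\min}(P)\Vert x-\hat x\Vert^2\le(x-\hat x)^\top P(x-\hat x)$ shows that $\gamma=\lambda_{\min}(P)$ is the natural choice in~\eqref{EQ:11}, so that $\Upsilon_1=\lambda_{\min}(P)\Vert x-\hat x\Vert^2-(x-\hat x)^\top P(x-\hat x)$.

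For $\mathscr{L}_{1}$ I would differentiate $\Upsilon_1$ in $x$ to get $2\lambda_{\min}(P)(x-\hat x)-2P(x-\hat x)$, apply the triangle inequality together with $\Vert P\Vert=\lambda_{\max}(P)$, and bound $\Vert x-\hat x\Vert\le\Vert x\Vert+\Vert\hat x\Vert\le 2s_1$ using $\hat X\subseteq X$; this yields $\mathscr{L}_{1}=4s_1(\lambda_{\min}(P)+\lambda_{\max}(P))$, exactly as in the linear case.

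The substantive step is $\mathscr{L}_{2_t}$. Since the noise is additive, $f(x,\nu,d,\varsigma)=f(x,\nu,d)+\varsigma$, and it enters both state updates identically, the common $\varsigma$ cancels inside the difference that defines $\mathcal S$, so the conditional expectation in $\Upsilon_2$ reduces to the deterministic expression $\big(f(x,\nu,d)-\mathcal P_x(f(\hat x,\nu,\hat d))\big)^\top P\big(f(x,\nu,d)-\mathcal P_x(f(\hat x,\nu,\hat d))\big)$. Treating $\hat x,\hat d,\nu$ as fixed and keeping $\mathcal P_x(f(\hat x,\nu,\hat d))$ frozen as a constant (rather than differentiating through the quantization map), the chain rule gives $\partial\Upsilon_2/\partial(d,x)=2\big(f(x,\nu,d)-\mathcal P_x(f(\hat x,\nu,\hat d))\big)^\top P\,[\partial_{d}f(x,\nu,d);\partial_{x}f(x,\nu,d)]-2\alpha_t[0;(x-\hat x)^\top P]-2\mathcal Z[d-\hat d;x-\hat x]$. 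I would then bound the abstraction-error factor by $\Vert f(x,\nu,d)\Vert+\Vert\mathcal P_x(f(\hat x,\nu,\hat d))-f(\hat x,\nu,\hat d)\Vert+\Vert f(\hat x,\nu,\hat d)\Vert\le 2\mathcal Y_f+\rho$ using~\eqref{EQ:4}, use $\Vert\partial_{x}f\Vert\le\mathcal Y_x$, $\Vert\partial_{d}f\Vert\le\mathcal Y_d$, $\Vert P\Vert=\lambda_{\max}(P)$, $\Vert x-\hat x\Vert\le 2s_1$, $\Vert[d-\hat d;x-\hat x]\Vert\le s_4$ and $\Vert\mathcal Z\Vert=s_5$, and finally regroup into $2\lambda_{\max}(P)(2\mathcal Y_f\mathcal Y_x+\mathcal Y_x\rho+2\mathcal Y_f\mathcal Y_d+\mathcal Y_d\rho+2s_1\alpha_t)+2s_4s_5$. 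Setting $\mathscr{L}_{\Upsilon_t}=\max\{\mathscr{L}_{1},\mathscr{L}_{2_t}\}$ completes the argument.

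The main obstacle is one of care rather than depth: the quantization map $\mathcal P_x$ is not differentiable, so the argument must keep $\mathcal P_x(f(\hat x,\nu,\hat d))$ frozen and charge its deviation from $f(\hat x,\nu,\hat d)$ entirely to the $\rho$-term via~\eqref{EQ:4}; one must also verify that the additive-noise cancellation is legitimate (so the expectation over $\varsigma$ becomes trivial) and that each estimate above holds uniformly over all $x\in X$, $\hat x\in\hat X$, $d\in D$, $\hat d\in\hat D$ and $\nu\in U$, which is precisely what makes the resulting constants valid global Lipschitz constants for $\Upsilon_1$ and $\Upsilon_2$.
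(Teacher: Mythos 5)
Your proposal is correct and follows essentially the same route as the paper's proof: both define $\mathscr{L}_{2_t}$ as a uniform bound on $\Vert\partial\Upsilon_2/\partial(d,x)\Vert$ with $\mathcal P_x(f(\hat x,\nu,\hat d))$ held fixed, bound the abstraction-error factor by $2\mathcal Y_f+\rho$ via~\eqref{EQ:4}, and obtain $\mathscr{L}_1$ from the gradient of $\Upsilon_1$ exactly as in the linear case. Your explicit remarks on the additive-noise cancellation and the non-differentiability of the quantization map are points the paper leaves implicit, but they do not change the argument.
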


\begin{figure*}
	%	\rule{\textwidth}{0.1pt}
	\begin{align}\notag
	&\mathcal I_{{2_t}} = \max\limits_{x\in X, d\in D}\Vert \begin{bmatrix}
	2(f(x,\nu,d) - \mathcal P_x(f(\hat x,\nu,\hat d)))^\top P\partial_{x}f(x,\nu,d)- 2\alpha_t(x-\hat x)^\top P\\2(f(x,\nu,d) - \mathcal P_x(f(\hat x,\nu,\hat d)))^\top P\partial_{d}f(x,\nu,d)
	\end{bmatrix} - 2 \begin{bmatrix}
	d-\hat d\\
	x-\hat x
	\end{bmatrix}^\top\begin{bmatrix}
	\mathcal Z^{11}&\mathcal Z^{12}\\
	\mathcal Z^{21}&\mathcal Z^{22}
	\end{bmatrix}\Vert\\\notag
	& \leq  \max\limits_{x\in X, d\in D} \Big\{2\Vert P \Vert\big(\Vert \partial_{x}f(x,\nu,d) \Vert(\Vert f(x,\nu,d) \Vert + \Vert \mathcal P_x(f(\hat x,\nu,\hat d))\Vert) + \Vert \partial_{d}f(x,\nu,d) \Vert(\Vert f(x,\nu,d) \Vert + \Vert \mathcal P_x(f(\hat x,\nu,\hat d))\Vert)\\\notag
	&~~~+\alpha_t(\Vert x \Vert + \Vert \hat x \Vert)\big)+2\Vert\begin{bmatrix}
	d-\hat d\\
	x-\hat x
	\end{bmatrix}\Vert~ \Vert\begin{bmatrix}
	\mathcal Z^{11}&\mathcal Z^{12}\\
	\mathcal Z^{21}&\mathcal Z^{22}
	\end{bmatrix}\Vert\Big\}\\\notag
	&\leq\max\limits_{x\in X, d\in D} \Big\{2\Vert P \Vert\big(\Vert \partial_{x}f(x,\nu,d) \Vert(\Vert f(x,\nu,d) \Vert + \rho + \Vert f(\hat x,\nu,\hat d)\Vert) + \Vert \partial_{d}f(x,\nu,d) \Vert(\Vert f(x,\nu,d) \Vert + \rho + \Vert f(\hat x,\nu,\hat d)\Vert)\\\notag
	&~~~+\alpha_t(\Vert x \Vert + \Vert \hat x \Vert)\big)+2\Vert\begin{bmatrix}
	d-\hat d\\
	x-\hat x
	\end{bmatrix}\Vert ~\Vert\begin{bmatrix}
	\mathcal Z^{11}&\mathcal Z^{12}\\
	\mathcal Z^{21}&\mathcal Z^{22}
	\end{bmatrix}\Vert\Big\}\\\label{EQ:101}
	& \leq 2\lambda_{\max}(P) (2\mathcal Y_f \mathcal Y_x + \mathcal Y_x\rho + 2\mathcal Y_f \mathcal Y_d + \mathcal Y_d\rho + 2s_1\alpha_t) + 2s_4s_5.
	\end{align}
	\rule{\textwidth}{0.1pt}
\end{figure*}

\begin{proof}
	By defining
	\begin{align}\notag
		\mathscr L_{{2_t}}:\left\{
		\hspace{-2mm}\begin{array}{l}\max\limits_{x\in X, d\in D}\Vert\frac{\partial \Upsilon_{2}}{\partial (d,x)}\Vert\\
			\,\,\,\,\,\,\, \text{s.t.} \quad \quad\!\!\Vert x\Vert \leq s_1, \Vert d\Vert \leq s_3, \Vert [d-\hat d;x-\hat x]\Vert \leq s_4,\end{array}\right.
	\end{align}
	one can obtain the chain of inequalities in~\eqref{EQ:101}. For $\Upsilon_1$:
	\begin{align*} 
		\mathscr{L}_{\Upsilon_1} &=\max\limits_{x\in X, \Vert x\Vert \leq s_1}\Vert 2\lambda_{\min}(P) (x - \hat x) - 2(x-\hat x)^\top P\Vert\\
		& \leq 4h (\lambda_{\min}(P) + \lambda_{\max}(P)).
	\end{align*}
	Then $\mathscr{L}_{\Upsilon_t} = \max \big\{\mathscr{L}_{1},\mathscr{L}_{2_t}\big\} = \max\big\{4s_1 (\lambda_{\min}(P) + \lambda_{\max}(P)),2\lambda_{\max}(P) (2\mathcal Y_f \mathcal Y_x + \mathcal Y_x\rho + 2\mathcal Y_f \mathcal Y_d + \mathcal Y_d\rho + 2s_1\alpha_t) + 2s_4s_5\big\}$, which concludes the proof.
\end{proof}

\begin{figure*}
%	\rule{\textwidth}{0.1pt}
	\begin{align}\notag
		\EE &\Big [\mathcal V(\kappa,f(x,\nu,\varsigma),\hat{f}(\hat x,\nu,\varsigma))\big | x, \hat x, \nu\Big ] =  \EE\Big[\sum_{i=1}^M\Big[\mathcal S_i(\kappa_i,f_i(x_i,\nu_i,d_i,\varsigma_i),\hat{f}_i(\hat x_i, \nu_i,\hat d_i,\varsigma_i))\,|\,x_i,\hat x_i,\hat{\nu}_i,d_i,\hat d_i\Big]\Big]\\\notag
		&=\sum_{i=1}^M\EE\Big[\mathcal S_i(\kappa_i,f_i(x_i,\nu_i,d_i,\varsigma_i),\hat{f}_i(\hat x_i, \nu_i,\hat d_i,\varsigma_i))\,|\,x_i,\hat x_i,\hat{\nu}_i,d_i,\hat d_i\Big]\\\notag
		&\leq\sum_{i=1}^M\big(\alpha_i\mathcal S_i(\kappa_i,x_i,\hat{x}_i)+\varpi_i +\begin{bmatrix}
			d_i\!-\!\hat d_i\\
			x_i\!-\!\hat x_i
		\end{bmatrix}^\top\!\!\!\begin{bmatrix}
			\mathcal Z_i^{11}&\mathcal Z_i^{12}\\
			\mathcal Z_i^{21}&\mathcal Z_i^{22}
		\end{bmatrix}\!\!\begin{bmatrix}
			d_i\!-\!\hat d_i\\
			x_i\!-\!\hat x_i
		\end{bmatrix}\!\big)  \\\notag
		&=\sum_{i=1}^M\big (\alpha_i\mathcal S_i(\kappa_i,x_i,\hat{x}_i)+\varpi_i\big)+\begin{bmatrix}
			d_1- \hat d_1\vspace{-0.2cm}\\
			\vdots\vspace{-0.2cm}\\
			d_M- \hat d_M\vspace{-0.1cm}\\
			x_1- \hat x_1\vspace{-0.2cm}\\
			\vdots\vspace{-0.2cm}\\
			x_M- \hat x_M\\
		\end{bmatrix}^\top\begin{bmatrix}
			\mathcal Z_1^{11}\vspace{-0.1cm}&&&\mathcal Z_1^{12}\vspace{-0.1cm}&&\\
			&\ddots\vspace{-0.1cm}&&&\ddots\vspace{-0.1cm}&\\
			&&\mathcal Z_M^{11}&&&\mathcal Z_M^{12}\\
			\mathcal Z_1^{21}\vspace{-0.1cm}&&&\mathcal Z_1^{22}\vspace{-0.1cm}&&\\
			&\ddots\vspace{-0.1cm}&&&\ddots\vspace{-0.1cm}&\\
			&&\mathcal Z_M^{21}\vspace{-0.03cm}&&&\mathcal Z_M^{22}\vspace{-0.03cm}
		\end{bmatrix}\begin{bmatrix}
			d_1- \hat d_1\vspace{-0.2cm}\\
			\vdots\vspace{-0.2cm}\\
			d_M- \hat d_M\vspace{-0.1cm}\\
			x_1- \hat x_1\vspace{-0.2cm}\\
			\vdots\vspace{-0.2cm}\\
			x_M- \hat x_M\\
		\end{bmatrix}\\\notag
		&=\sum_{i=1}^M \big(\alpha_i\mathcal S_i(\kappa_i,x_i,\hat{x}_i)+\varpi_i\big)+\begin{bmatrix}
			x_1- \hat x_1\\
			\vdots\\
			x_M- \hat x_M\\
		\end{bmatrix}^\top\begin{bmatrix}
			\mathcal M\\
			\mathds{I}
		\end{bmatrix}^\top\begin{bmatrix}
			\mathcal Z_1^{11}\vspace{-0.1cm}&&&\mathcal Z_1^{12}\vspace{-0.1cm}&&\\
			&\ddots\vspace{-0.1cm}&&&\ddots\vspace{-0.1cm}&\\
			&&\mathcal Z_M^{11}&&&\mathcal Z_M^{12}\\
			\mathcal Z_1^{21}\vspace{-0.1cm}&&&\mathcal Z_1^{22}\vspace{-0.1cm}&&\\
			&\ddots\vspace{-0.1cm}&&&\ddots\vspace{-0.1cm}&\\
			&&\mathcal Z_M^{21}\vspace{-0.03cm}&&&\mathcal Z_M^{22}\vspace{-0.03cm}
		\end{bmatrix}\begin{bmatrix}
			\mathcal M\\
			\mathds{I}
		\end{bmatrix}\begin{bmatrix}
			x_1- \hat x_1\\
			\vdots\\
			x_M- \hat x_M\\
		\end{bmatrix}\\\label{Eq:17}
		&\le\sum_{i=1}^M \alpha_i\mathcal S_i(\kappa_i,x_i,\hat{x}_i) +\sum_{i=1}^M\varpi_i \leq \alpha \mathcal V(\kappa,x,\hat{x})+\varpi.
	\end{align}
	\rule{\textwidth}{0.1pt}
\end{figure*}

\subsection{Data-Driven Finite MDPs via Maximum Likelihood Estimation}\label{MLE}

Here, we construct finite MDPs from data by estimating parameters of the probability distribution via maximum likelihood estimation (MLE)~\cite{myung2003tutorial}. If the underlying stochasticity has a Gaussian distribution, its  mean and standard deviation can be estimated via MLE as
\begin{align*}
	\hat \mu_{\hat N} = \frac{1}{\hat N}\sum_{j = 1}^{\hat N} \tilde x_j, \quad \hat\sigma_{\hat N}^2 = \frac{1}{{\hat N}-1}\sum_{j = 1}^{\hat N} (\tilde x_j - \hat\mu_{\hat N})^2,
\end{align*}
where $\hat \mu_{\hat N},\hat \sigma_{\hat N}$ are the \emph{empirical} mean and standard deviation given $\hat N$ sampled data. Additionally, MLE approach can be used to estimate parameters of \emph{any arbitrary} probability distributions. We then use the estimated parameters from MLE method and construct a finite MDP via the results of Section~\ref{MDPs}. Although it is possible to provide an asymptotic confidence bound for MLE using Fisher information~\cite{le2012asymptotic}, we leave it to a future work for the sake of an easier presentation.

\section{Compositional Construction of SBF for Interconnected dt-SCS}\label{Sec:Com}
Here, we propose a compositional dissipativity approach to build an SBF for an interconnected network using SStF of individual subsystems. The constructed SBF is then utilized to compute the probabilistic mismatch between trajectories of the interconnected system $\Lambda$ and its finite MDP $\hat\Lambda$, as presented in Theorem~\ref{Thm:4}.

\begin{theorem}\label{Thm:3}
	Consider an interconnected dt-SCS $\Lambda = \mathcal{I}(\Lambda_1,\ldots,\Lambda_M)$ composed of $M\in\mathbb N_{0}^+$ subsystems~$\Lambda_i$. Let there exist an SStF between each subsystem $\Lambda_i$ and its finite MDP $\hat\Lambda_i$ with a confidence of $1-\beta$, with $ \beta = \beta_1 + \beta_2$, as in Theorem~\ref{Thm:6}. Then 
	\begin{equation}\label{Comp: SBF}
		\mathcal V(\kappa,x,\hat x) := \sum_{i=1}^{M}\mathcal S_i(\kappa_i,x_i,\hat x_i),
	\end{equation}
	is an SBF between $\hat\Lambda=\mathcal{I}(\hat\Lambda_1,\ldots, \hat\Lambda_{M})$ and $\Lambda=\mathcal{I}(\Lambda_1,\ldots, \Lambda_{M})$ with a confidence of $1-\sum_{i=1}^{M}\beta_{i}$, where $\beta_i= \beta_{1_i}-\beta_{2_i}$, 	
	if
	\begin{align}\label{Con_1}
		&\begin{bmatrix}
			\mathcal M\\\mathds{I}
		\end{bmatrix}^\top \mathcal Z_{cmp}\begin{bmatrix}
			\mathcal M\\\mathds{I}
		\end{bmatrix}\preceq 0, \\\notag
		\text{with} \quad 
		\mathcal Z_{cmp}:=&\begin{bmatrix}
			\mathcal Z_1^{11}\vspace{-0.1cm}&&&\mathcal Z_1^{12}\vspace{-0.1cm}&&\\
			&\ddots\vspace{-0.1cm}&&&\ddots\vspace{-0.1cm}&\\
			&&\mathcal Z_M^{11}&&&\mathcal Z_M^{12}\\
			\mathcal Z_1^{21}\vspace{-0.1cm}&&&\mathcal Z_1^{22}\vspace{-0.1cm}&&\\
			&\ddots\vspace{-0.1cm}&&&\ddots\vspace{-0.1cm}&\\
			&&\mathcal Z_M^{21}\vspace{-0.03cm}&&&\mathcal Z_M^{22}\vspace{-0.03cm}
		\end{bmatrix}\!\!.
	\end{align}
\end{theorem}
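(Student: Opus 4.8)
The plan is to verify that the candidate function $\mathcal V(\kappa,x,\hat x) = \sum_{i=1}^M \mathcal S_i(\kappa_i,x_i,\hat x_i)$ satisfies the two defining conditions \eqref{EQ:15}-\eqref{EQ:16} of an SBF, using the fact that each $\mathcal S_i$ satisfies the SStF conditions \eqref{EQ:151}-\eqref{EQ:161}. First I would handle the lower-bound condition \eqref{EQ:15}: from \eqref{EQ:151} we have $\gamma_i\Vert x_i-\hat x_i\Vert^2 \leq \mathcal S_i(\kappa_i,x_i,\hat x_i)$ for each $i$; summing over $i$ and setting $\gamma := \min_i \gamma_i$ gives $\gamma\Vert x-\hat x\Vert^2 = \gamma\sum_i \Vert x_i - \hat x_i\Vert^2 \leq \sum_i \gamma_i\Vert x_i-\hat x_i\Vert^2 \leq \mathcal V(\kappa,x,\hat x)$, which is exactly \eqref{EQ:15}. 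This step is routine.

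The core of the argument is the martingale-type condition \eqref{EQ:16}, and this is carried out in the displayed chain of inequalities \eqref{Eq:17}. The key steps, in order, are: (i) expand $\EE[\mathcal V(\cdots)\,|\,x,\hat x,\nu]$ as $\EE[\sum_i \mathcal S_i(\cdots)]$ and exchange the (finite) sum with the expectation; (ii) apply the SStF inequality \eqref{EQ:161} to each summand, using that $\varsigma_i$ are the noise components driving subsystem $i$ and that the internal inputs are $d_i$ and $\hat d_i$, obtaining the bound $\sum_i \big(\alpha_i \mathcal S_i(\kappa_i,x_i,\hat x_i) + \varpi_i + [d_i-\hat d_i; x_i-\hat x_i]^\top \mathcal Z_i [d_i-\hat d_i; x_i-\hat x_i]\big)$; (iii) reassemble the $M$ quadratic terms into a single quadratic form in the stacked vector $[d_1-\hat d_1;\cdots;d_M-\hat d_M;x_1-\hat x_1;\cdots;x_M-\hat x_M]$ with the block matrix $\mathcal Z_{cmp}$; (iv) substitute the interconnection constraint \eqref{Eq:41}, namely $[d_1-\hat d_1;\cdots;d_M-\hat d_M] = \mathcal M[x_1-\hat x_1;\cdots;x_M-\hat x_M]$ (which holds because the interconnection map is linear, so the differences satisfy the same relation), turning the stacked vector into $\big[\begin{smallmatrix}\mathcal M\\ \mathds I\end{smallmatrix}\big][x_1-\hat x_1;\cdots;x_M-\hat x_M]$; (v) invoke the compositional condition \eqref{Con_1}, $\big[\begin{smallmatrix}\mathcal M\\ \mathds I\end{smallmatrix}\big]^\top \mathcal Z_{cmp}\big[\begin{smallmatrix}\mathcal M\\ \mathds I\end{smallmatrix}\big] \preceq 0$, to discard the entire quadratic term; (vi) conclude $\EE[\mathcal V(\cdots)] \leq \sum_i \alpha_i \mathcal S_i(\kappa_i,x_i,\hat x_i) + \sum_i \varpi_i \leq \alpha \mathcal V(\kappa,x,\hat x) + \varpi$ with $\alpha := \max_i \alpha_i \in (0,1)$ and $\varpi := \sum_i \varpi_i \in \R_0^+$.

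The probabilistic bookkeeping is the remaining piece: each individual SStF is only certified with confidence $1-\beta_i$ (with $\beta_i = \beta_{1_i}+\beta_{2_i}$), so the constructed $\mathcal V$ is an SBF on the event that \emph{all} $M$ subsystem certifications hold simultaneously. By a union bound over the $M$ failure events, $\PP^N\big\{\hat\Lambda\cong_{\mathcal V}\Lambda\big\} \geq 1 - \sum_{i=1}^M \beta_i$, which is the claimed confidence. I would state this explicitly using complements and subadditivity, exactly as in the proof of Theorem~\ref{Thm:6}.

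I do not expect a serious obstacle here; the result is essentially algebraic bookkeeping once \eqref{Con_1} is assumed. The one point requiring a little care is step (iv)–(v): one must check that the internal-input differences genuinely satisfy $[d_i-\hat d_i]_i = \mathcal M [x_i - \hat x_i]_i$, which relies on the interconnection relation \eqref{Eq:41} being the \emph{same} linear map $\mathcal M$ applied to both the concrete states $[x_i]_i$ and the abstract states $[\hat x_i]_i$ (equivalently, that $\hat\Lambda = \mathcal I(\hat\Lambda_1,\ldots,\hat\Lambda_M)$ is formed with the same coupling matrix). Granting this, the quadratic term is exactly $v^\top \big[\begin{smallmatrix}\mathcal M\\ \mathds I\end{smallmatrix}\big]^\top \mathcal Z_{cmp}\big[\begin{smallmatrix}\mathcal M\\ \mathds I\end{smallmatrix}\big] v \leq 0$ for $v = [x_i-\hat x_i]_i$, and the proof closes.
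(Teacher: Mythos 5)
Your proposal is correct and follows essentially the same route as the paper's proof: verify \eqref{EQ:15} by summing the subsystem lower bounds, establish \eqref{EQ:16} via the chain in \eqref{Eq:17} (expectation--sum exchange, apply \eqref{EQ:161} termwise, assemble the block quadratic form, substitute $[d-\hat d]=\mathcal M[x-\hat x]$, and kill the quadratic term with \eqref{Con_1}), then combine the $M$ per-subsystem confidences by a union bound. The only cosmetic differences are your choice $\gamma=\min_i\gamma_i$ where the paper uses $\gamma=\bigl(\sum_{i=1}^M 1/\gamma_i\bigr)^{-1}$ (yours is in fact the sharper constant), and your $\beta_i=\beta_{1_i}+\beta_{2_i}$, which corrects what is evidently a sign typo in the theorem statement.
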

\vspace{0.2cm}
\begin{proof}
	We first show that SBF $\mathcal V$ in \eqref{Comp: SBF} fulfills condition \eqref{EQ:15}. For any $x=\intcc{x_1;\ldots;x_M}\in X$ and  $\hat x=\intcc{\hat x_1;\ldots;\hat x_M}\in \hat X$:
	\begin{align}\notag
		&\Vert x-\hat x \Vert^2\le\sum_{i=1}^M \Vert  x_i-\hat x_i \Vert^2
		\le \sum_{i=1}^M \frac{\mathcal S_i(\kappa_i,x_i, \hat x_i)}{\gamma_{i}}\le \bar\gamma\mathcal V(\kappa,x,\hat x),
	\end{align}
	with $\bar\gamma=\sum_{i=1}^M \frac{1}{\gamma_{i}}$. Hence, condition~\eqref{EQ:15} is met with $\gamma=\frac{1}{\bar\gamma}$. Now we continue with showing condition~\eqref{EQ:16}. By utilizing condition~\eqref{Con_1} and defining
	\begin{align}\notag
		\alpha r\Let \max\Big\{\sum_{i=1}^M\alpha_i r_i\,\,\big|\, r_i  {\ge 0},\,\,\sum_{i=1}^M r_i=r\Big\}, \quad\varpi\Let\sum_{i=1}^M\varpi_i,
	\end{align}
	the chain of inequalities in \eqref{Eq:17} can be acquired. Then condition \eqref{EQ:16} is also fulfilled. 
	
	\noindent
	We now show that the proposed $\mathcal V$ in~\eqref{Comp: SBF} is an SBF between $\hat \Lambda=\mathcal{I}(\hat\Lambda_1,\ldots, \hat\Lambda_{M})$ and $\Lambda=\mathcal{I}(\Lambda_1,\ldots, \Lambda_{M})$ with a confidence of $1-\sum_{i=1}^{M}\beta_{i}$, where $\beta_i= \beta_{1_i}-\beta_{2_i}$. By defining events $\mathcal A_i$ as $\mathcal A_i: \big\{\hat\Lambda_i\cong_{\mathcal{S}}\!\Lambda_i\big\}$ for all $i \in\{1,\dots,M\}$, where $\PP\big\{\mathcal A_i\big\}\ge 1-\beta_i$, the concurrent occurrence of events $\mathcal A_i$ can be quantified as:
	\begin{align}\label{eq:proof11}
		&\PP\big\{\mathcal A_1\cap\dots\cap \mathcal A_{M}\big\}=1-\PP\big\{\bar {\mathcal A}_1\cup\dots\cup \bar {\mathcal A}_{M} \big\},
	\end{align}
	where $\bar {\mathcal A}_i$ are complements of $\mathcal A_i, \forall i \in\{1,\dots,M\}$. Since
	\begin{align}\notag
		\PP&\big\{\bar {\mathcal A}_1\cup\dots\cup \bar {\mathcal A}_{M} \big\}\leq\PP\big\{\bar {\mathcal A}_1\big\}+\dots+\PP\big\{\bar {\mathcal A}_{M}\big\},
	\end{align}
	and by leveraging \eqref{eq:proof11}, one can finally conclude that 
	\begin{align}\label{eq:8}
		\PP\big\{\mathcal A_1\!\cap\!\dots\!\cap\! \mathcal A_{M}\big\}\!\geq\! 1-\PP\big\{\bar {\mathcal A}_1\big\}+\dots+\PP\big\{\bar {\mathcal A}_{M}\big\}
		\!\geq\! 1-\sum_{i =1}^{M} \beta_i.
	\end{align}
	Hence, $\mathcal V$ is an SBF between $\hat\Lambda$ and $\Lambda$ with a confidence of $1-\sum_{i=1}^{M}\beta_{i}$, with $\beta_i= \beta_{1_i}-\beta_{2_i}$, which concludes the proof.
\end{proof}

\section{Case Study: Room Temperature Network}\label{Sec:Case}

We showcase our data-driven results using a room temperature network consisting of $100$ rooms, each with unknown models, interconnected in a circular topology, and equipped with cooling systems. The temperature dynamics, denoted as $x(\cdot)$, can be described through the following interconnected network~\cite{meyer}:
\begin{align}\notag
	\Lambda\!:x(k+1)=Ax(k)+\theta T_{c}\nu(k)+ \digamma T_{E} + \varsigma(k),
\end{align}
where the matrix $A$ has diagonal entries $ a_{ii}=1-2\aleph-\digamma-\theta\nu_i(k)$, $i\in\{1,\ldots,M\}$, off-diagonal entries $a_{i,i+1}=a_{i+1,i}=a_{1,M}=a_{M,1}=\aleph$, $i\in \{1,\ldots,M-1\}$, and other entries being zero. Symbols $\aleph$, $\digamma$, and $\theta$ are thermal factors between rooms $i \pm 1$ and $i$, the outside environment and the room $i$, and the cooler and the room $i$, respectively.
In addition, $x(k)=[x_1(k);\ldots;x_{M}(k)]$, $x(k)=[\varsigma_1(k);\ldots;\varsigma_{M}(k)]$, $T_E=[T_{e_1};\ldots;T_{e_{M}}]$, with  $T_{e_i}=-1\,^\circ C$, $\forall i\in\{1,\ldots,{M}\}$, being the outside temperatures. The cooler temperature is $T_c=5\,^\circ C$ and the control input is $\nu\in\{0,0.05,0.1,0.15,0.2\}$. Now by characterizing each individual room as 
\begin{align}\label{sub_room}
	\Lambda_i\!:x_i(k+1)&=a_{ii}{x_i}(k)+\aleph (d_{i-1}(k) + d_{i+1}(k))+ \theta T_{c} \nu_i(k) +\digamma T_{e_i} + \varsigma_i(k),
\end{align}
where $d_0 = d_{M}, d_{M+1} = d_1$, one has $\Lambda=\mathcal{I}(\Lambda_1,\ldots,\Lambda_{M})$, with a coupling matrix $\mathcal M$ as $\bar m_{i,i+1}=\bar m_{i+1,i}=\bar m_{1,M}=\bar m_{M,1}=1$, $i\in \{1,\ldots,M-1\}$, and other entries being zero. We assume the model of each room is unknown to us. The main target is to compositionally construct a finite MDP as well as a data-driven SBF via solving $\text{SOP}$~\eqref{SCP1}. Accordingly, we utilize the data-driven finite MDP and synthesize controllers regulating the temperature of each room in a safe set $X_i = [-0.5,0.5]$ with a guaranteed probabilistic confidence.

We consider our SStF as $\mathcal S_i(\kappa_i,x_i,\hat x_i) = \kappa_{1_i}(x_{i} - \hat x_{i})^4 + \kappa_{2_i}(x_{i} - \hat x_{i})^2 + \kappa_{3_i}$. We also fix $\varepsilon_{t_i} = 0.025$, $\beta_{2_i} = 10^{-4}$, and $\rho_i = 0.05$, a-priori. According to~\eqref{EQ:12}, we compute $N_i = 911$ required for solving $\text{SOP}$ in~\eqref{SCP1}. We also fix $\mu_i = 0.1$, $\beta_{1_i} = 10^{-4}$ and compute $L_i = 643$ according to Lemma~\ref{Lem:2}. By solving $\text{SOP}$~\eqref{SCP1} with $N_i, L_i$, we obtain the corresponding decision variables as
\begin{align}\notag
	&\mathcal S_i(\kappa_i,x_i,\hat x_i) = 0.11(x_{i} - \hat x_{i})^4 + 0.14(x_{i} - \hat x_{i})^2 + 143,\\\notag
	& \mathcal Z^{11}_i = 0.001, \mathcal Z^{22}_i = -0.01, \mathcal Z^{12}_i = \mathcal Z^{21}_i = 0,\\\label{new}
	&\gamma_i^* = 141, \varpi^*_i = 0.42, \psi^*_{\varsigma_i}=-0.3019,
\end{align}
with a fixed $\alpha_i = 0.99$. We now compute $\mathscr{L}_{\Upsilon_{t_i}} = 0.8$ according to Lemma~\ref{Lem:3_1}. We also compute $\eta^{-1}(\varepsilon_{2_{t_i}})$ according to Lemma~\ref{Function_g} as $\eta^{-1}(\varepsilon_{2_{t_i}}) = 0.362$. Since $\psi^*_{\varsigma_i}+\max_t\mathscr{L}_{\mathcal H_{t_i}}\eta^{-1}(\varepsilon_{2_{t_i}}) = -11\times 10^{-3} \leq 0$, the constructed data-driven $\mathcal S_i$ is an SStF between each unknown room $\Lambda_i$ and its finite MDP $\hat\Lambda_i$, with a confidence of at least $1-\beta_{1_i} - \beta_{2_i} = 1 - 2\times 10^{-4}$.

We now construct an SBF for the interconnected rooms via SStF of individual rooms, constructed from data. By leveraging $\mathcal  Z_i$ as in~\eqref{new}, the matrix $\mathcal Z_{cmp}$ is reduced to
\begin{align*}
	\mathcal Z_{cmp}=\begin{bmatrix} 0.001\mathds{I}_{100} & 0 \\ 0 & -0.01\mathds{I}_{100} \end{bmatrix}\!\!,
\end{align*}
and compositionality condition \eqref{Con_1} is reduced to
\begin{align*}
	\begin{bmatrix} \mathcal M \\ \mathds{I}_{100} \end{bmatrix}^\top\!\!&\mathcal Z_{cmp}\begin{bmatrix} \mathcal M \\ \mathds{I}_{100} \end{bmatrix}=0.001\mathds{I}_{100}\mathcal M^\top\mathcal M-0.01\mathds{I}_{100} \preceq 0.
\end{align*}
Hence, one can certify that $\mathcal  V(\kappa, x,\hat x) = \sum_{i=1}^{100}\{\mathcal S_i(\kappa_i,x_i,\hat x_i)\} = \sum_{i=1}^{100}\{0.11(x_{i} - \hat x_{i})^4 + 0.14(x_{i} - \hat x_{i})^2 + 143\}$ is an SBF between the interconnected rooms $\Lambda$ and its finite MDP $\hat\Lambda$ with $\gamma = 14100,\alpha = 0.99, \varpi = 42,$ and a confidence of $1- \sum_{i=1}^{100}\beta_{1_i} -\sum_{i=1}^{100}\beta_{2_i}= 98\%$. Hence, by employing the results of Theorems~\ref{Thm:4} and~\ref{Thm:3}, we guarantee that the mismatch between state trajectories of $\Lambda$ and $\hat\Lambda$ remains within $\varepsilon = 0.5$ during $\mathcal T= 5$ ($45$ minutes) with a probability of $95\%$ and a confidence of $98\%$.

Let us now synthesize a controller for $\Lambda$ via its data-driven finite MDP $\hat \Lambda$, constructed via the MLE approach with $\hat N = 10^5$, such that the controller regulates state of each room within $[-0.5,0.5]$. To do so, we first synthesize a controller for each abstract room $\hat \Lambda_i$ via \texttt{AMYTISS} \cite{lavaei2020amytiss} and then refine it back over unknown original room $\Lambda_i$. Accordingly, the overall controller for the network would be a vector whose entries are controllers for individual rooms. Closed-loop trajectories of a representative room with several noise realizations are depicted in Fig.~\ref{Simulation}. As observed, all trajectories respect the safety specification.

\begin{figure}[h]
	\centering 
	\includegraphics[width=0.53\linewidth]{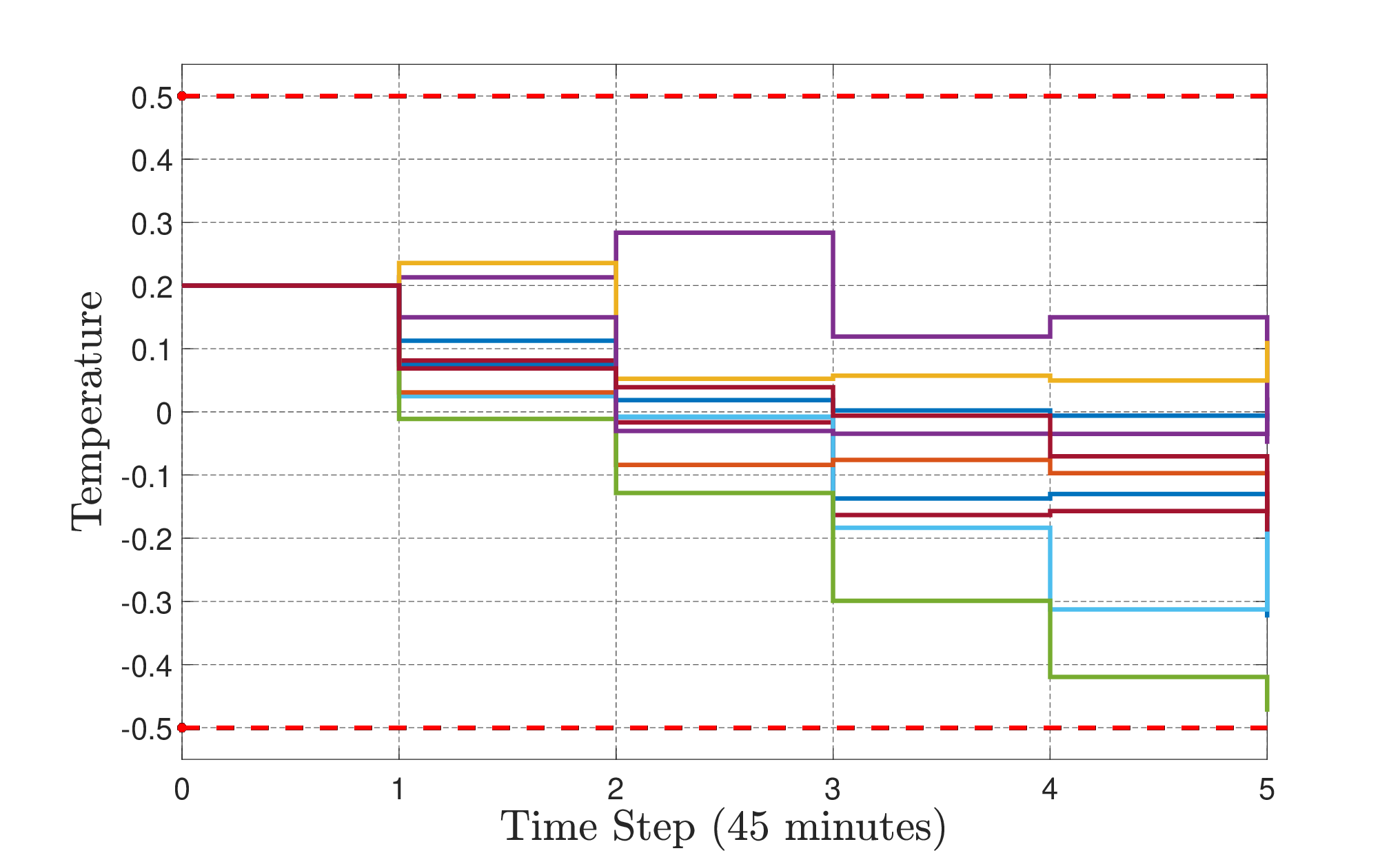}
	\caption{Closed-loop state trajectories of an unknown representative room with several noise realizations.}
	\label{Simulation}
\end{figure}

\section{Conclusion}

In this work, we developed a compositional data-driven technique using dissipativity reasoning for constructing finite MDPs for large-scale stochastic networks with unknown mathematical models. The main goal was to leverage stochastic bisimulation functions (SBF) and quantify the closeness between an unknown original network and its data-driven finite MDP, while proposing a certified probabilistic confidence. In our proposed scheme, we first constructed a stochastic storage function between each unknown subsystem and its data-driven finite MDP with an a-priori confidence level. We then provided dissipativity-type compositional conditions to construct an SBF for an unknown interconnected network using its data-driven SStF of subsystems. We verified our results over a room temperature network composing $100$ rooms with unknown dynamics.

\bibliographystyle{alpha}
\bibliography{biblio}

\end{document}